\newcommand{\LL}{\mathbb{L}}
\newcommand{\HH}{\mathbb{H}}
\newtheorem{theorem}{Theorem}
\newtheorem{lem}{Lemma}
\newtheorem{remark}{Remark}
\newtheorem{example}{Example}
\newcommand{\T}{\mathsf{T}}
\renewcommand{\tilde}{\widetilde}
\renewcommand{\hat}{\widehat}
\newcommand{\epl}{$\mathrm{e}+$}
\newcommand{\emi}{$\mathrm{e}-$}
\newcommand{\HSV}{Hankel singular values\xspace}
\title{Balanced Truncation Model Reduction with A Priori Error Bounds for LTI Systems with Nonzero Initial Value\thanks{This research has been supported by the Einstein Foundation Berlin within the framework of the Einstein Center for Mathematics (ECMath) in the project SE1: ``Reduced order modeling for data assimilation''. Most of this research has been carried out while the second author was at Universit\"at Hamburg and Technische Universit\"at Berlin. Their support is gratefully acknowledged.}}
\author{Christian Schr\"oder\thanks{Freelancing numerical analyst. Last academic position: Techni\-sche Universit\"at Berlin, Institut f\"ur Mathematik, Stra{\ss}e des 17. Juni 136, 10623 Berlin, Germany, E-Mail: \texttt{chris.schroeder@gmx.net}.} and Matthias Voigt\thanks{Corresponding author, UniDistance Suisse, Schinerstrasse 18, 3900 Brig, Switzerland, E-Mail: \texttt{matthias.voigt@fernuni.ch}.}}
\begin{document}

\maketitle

\begin{abstract}
 In standard balanced truncation model order reduction, the initial condition is typically ignored in the reduction procedure and is assumed to be zero instead. However, such a reduced-order model  may be a bad approximation to the full-order system, if the initial condition is not zero. In the literature there are several attempts for modified reduction methods at the price of having no error bound or only a posteriori error bounds which are often too expensive to evaluate. In this work we propose a new balancing procedure that is based on a shift transformation on the state. We first derive a joint projection reduced-order model in which the part of the system depending only on the input and the one depending only on the initial value are reduced at once and we prove an a priori error bound. With this result at hand, we derive a separate projection procedure in which the two parts are reduced separately. This gives the freedom to choose different reduction orders for the different subsystems. Moreover, we discuss how the reduced-order models can be constructed in practice. Since the error bounds are parameter-dependent we show how they can be optimized efficiently. We conclude this paper by comparing our results with the ones from the literature by a series of numerical experiments.
\end{abstract}

\section{Introduction}
We consider model order reduction (MOR) of asymptotically stable linear time invariant (LTI) dynamical systems of the form
\begin{align}\label{eq:LTI}
\begin{split}
 \dot{x}(t) &= Ax(t)+Bu(t), \quad x(0) = x_0, \\ 
       y(t) &= Cx(t)+Du(t) 
\end{split}
\end{align}
for $t\ge 0$ with initial value $x_0 = X_0z_0$. We assume that $A \in \R^{n \times n}$, $B \in\R^{n \times m}$, $C \in \R^{p \times n}$, and $D \in \R^{p \times m}$, where $m,\,p \ll n$. The functions $u:[0,\infty) \to \R^m$, $x:[0,\infty) \to \R^n$, and $u:[0,\infty) \to \R^p$ are the \emph{input}, \emph{state}, and \emph{output}, respectively. In this paper we assume that all feasible initial conditions live in a low-dimensional subspace of $\R^n$ spanned by the columns of the matrix $X_0 \in \R^{n \times q}$.

In this work we aim for a reduced order model (ROM) of state dimension $r \ll n$ of the form
\begin{subequations}\label{eq:ROM}
\begin{align}
 \dot{x}_r(t) &= A_r x_r(t) + B_r u(t), \quad x_r(0) = X_{0,r} z_0, \label{eq:ROMx} \\
       y_r(t) &= C_r x_r(t) + D_r u(t) + f(t), \label{eq:ROMy}
\end{align}
\end{subequations}
such that $y_r(\cdot)$ is close to $y(\cdot)$ for all inputs $u(\cdot)$ and all initial values defined by $z_0$.

We will present two methods to achieve this goal. The first method determines a ROM of state-space dimension $r$ with error bound
\begin{equation}\label{eq:bound_share}
 \left\|y-y_r\right\|_{\LL_2} \le 2(\eta_{r+1} + \cdots + \eta_n)(\left\|u\right\|_{\LL_2}+\beta\left\|z_0\right\|_2).
\end{equation}
The other method obtains a ROM of state-space dimension $r = k + \ell$ with error bound
\begin{equation}\label{eq:bound_sep}
 \left\|y-y_r\right\|_{\LL_2}\le 2(\sigma_{k+1} + \cdots + \sigma_n)\left\|u\right\|_{\LL_2} + 2(\theta_{\ell+1}+\cdots+\theta_n)\left\|z_0\right\|_2.
\end{equation}
Here, $\eta_i$, $\sigma_i$, and $\theta_i$, $i = 1,\,\ldots,\,n$ are the Hankel singular values of certain systems.

In this paper we will first prove these bounds. Then we give a practical procedure on how to construct the ROMs and evaluate it numerically. Moreover, we give a detailed comparison to other approaches. 

\section{State of the Art} \label{sec:sota}
In this section we review the current state of the art on balanced truncation model reduction. In particular, we will discuss several approaches for treating inhomogeneous initial conditions and give the error bounds, if available. It is important to note that the error bounds listed below are typically \emph{a posteriori bounds}, so the error bound is only available after the ROM has been computed. In constrast to that, our bounds \eqref{eq:bound_share} and \eqref{eq:bound_sep} are \emph{a priori bounds}, so they can be evaluated before the ROM is known. 

\paragraph{Balanced Truncation (BT)} is a well known method for model reduction of asymptotically stable LTI systems, see, e.\,g.,~\cite{GugA04}.
From given $A,\,B,\,C$ and a desired reduced order $r$, this method computes projection matrices $V_r,\,W_r \in \R^{n \times r}$ and Hankel singular values $\sigma_1,\,\ldots,\,\sigma_n$ which are independent of $r$.
We will use the notation
\begin{equation*}
 [V_r,W_r,\sigma_1,\ldots,\sigma_n] = \operatorname{BT}(A,B,C,r).
\end{equation*}
The ROM (of order $r$) is then given by projection, that is
\begin{subequations}\label{eq:proj}
\begin{align}
 A_r &:= W_r^\T A V_r, \quad B_r := W_r^\T B, \quad C_r := CV_r, \quad D_r := D,\label{eq:proj_abcd}\\
 X_{0,r} &:= W_r^\T X_0, \quad f \equiv 0. \label{eq:proj_x0f}
\end{align}
\end{subequations}

It is well-known that the ROM is asymptotically stable if $\sigma_r > \sigma_{r+1}$~\cite[Section~7.2]{Ant05}. We define $\LL_2^k := \left\{ g:[0,\infty) \to \R^k : \int_0^\infty g(t)^\T g(t){\rm d}t < \infty \right\}$ with induced norm $\left\|g\right\|_{\LL_2}:=\left(\int_0^\infty g(t)^\T g(t){\rm d}t\right)^{1/2}$. Analogously, we define $\LL_2^{k \times \ell} := \left\{ G:[0,\infty) \to \R^{k \times \ell} : \int_0^\infty G(t)^\T G(t){\rm d}t < \infty \right\}$ with induced norm $\left\|G\right\|_{\LL_2}:=\left(\int_0^\infty G(t)^\T G(t){\rm d}t\right)^{1/2}$. If $u \in \LL_2^m$, then due to asymptotic stability of the original system and the ROM we have $y,\,y_r \in \LL_2^p$. If further $x_0 = 0$, then BT guarantees the a priori error bound
\begin{equation}\label{eq:bt_bound}
 \left\|y-y_r\right\|_{\LL_2}\le 2(\sigma_{r+1}+\cdots+\sigma_n)\left\|u\right\|_{\LL_2}.
\end{equation}

This error bound can be extended to the case of inhomogeneous initial values as follows. We can write the outputs of the FOM and the ROM as
\begin{align*}
 y(t) &= C\mathrm{e}^{At}X_0z_0 + \int_{0}^t C\mathrm{e}^{A(t-\tau)}Bu(\tau) \mathrm{d}\tau + Du(t), \\
 y_r(t) &= C_r\mathrm{e}^{A_rt}X_{0,r}z_0 + \int_{0}^t C_r\mathrm{e}^{A_r(t-\tau)}B_ru(\tau) \mathrm{d}\tau + Du(t).
\end{align*}
Since in BT, both $A$ and $A_r$ are asymptotically stable, $C\mathrm{e}^{A\cdot}X_0,\, C_r\mathrm{e}^{A_r\cdot}X_{0,r} \in \LL_2^{p \times q}$ as well as $C\mathrm{e}^{A\cdot}X_0z_0,\, C_r\mathrm{e}^{A_r\cdot}X_{0,r}z_0 \in \LL_2^{p}$ and hence we obtain the estimate
\begin{align*}
 \left\|y-y_r\right\|_{\LL_2} &\le \left\| C\mathrm{e}^{A\cdot}X_0z_0 - C_r\mathrm{e}^{A_r\cdot}X_{0,r}z_0 \right\|_{\LL_2} + 2(\sigma_{r+1}+\cdots+\sigma_n)\left\|u\right\|_{\LL_2} \\ &
 \le \left\| C\mathrm{e}^{A\cdot}X_0 - C_r\mathrm{e}^{A_r\cdot}X_{0,r} \right\|_{\LL_2} {\| z_0 \|}_2 + 2(\sigma_{r+1}+\cdots+\sigma_n)\left\|u\right\|_{\LL_2}.
\end{align*}
However, $\left\| C\mathrm{e}^{A\cdot}X_0 - C_r\mathrm{e}^{A_r\cdot}X_{0,r} \right\|_{\LL_2}$ is nothing but the $\HH_2$-norm of the transfer function of the system $\left[ \begin{bsmallmatrix} A & 0 \\ 0 & A_r \end{bsmallmatrix}, \begin{bsmallmatrix} X_0 \\ X_{0,r} \end{bsmallmatrix}, \begin{bsmallmatrix} C & -C_r \end{bsmallmatrix} \right]$ and can be computed by solving a (typically large-scale) Lyapunov equation \cite[Chap. 4]{ZhoDG96}. The problem is that in standard BT, we do not attempt to minimize the part of the error that is related to the initial values. One can think of certain situations  in which this error is large. One possibility is the case in which $X_{0,r} = W_r^\T X_0 \approx 0$. Then the error associated with the initial values is essentially given by the $\HH_2$-norm of the transfer function of the system $[A,X_0,C]$ which can be large.


\paragraph{The method TrlBT of Baur, Benner, and Feng~\cite{BauBF14}} consists of translating the state $x(t)$ to $\tilde{x}(t) := x(t) - x_0$. The original system becomes
\begin{align*}
  \dot{\tilde{x}}(t) &= A\tilde{x}(t) + \begin{bmatrix}B & Ax_0 \end{bmatrix}\tilde{u}(t), \quad \tilde{x}(0)=0,\\
                y(t) &= C\tilde{x}(t) + \begin{bmatrix}D & Cx_0 \end{bmatrix}\tilde{u}(t) \quad \text{with }\tilde{u}(t):=\begin{bmatrix}u(t)\\1\end{bmatrix}.
\end{align*}
This homogeneous system is then reduced by standard balanced truncation. Note that BT is applied to an expanded system, here $\left[A,\begin{bmatrix}B & Ax_0\end{bmatrix},C\right]$. Since $\tilde{u} \notin \LL_2^{m+1}$, no computable error bound similarly to~\eqref{eq:bt_bound} can be given.

\paragraph{The method AugBT of Heinkenschloss, Reis, and Antoulas~\cite{HeiRA11}}
consists of applying BT to the expanded system $\left[A,\begin{bmatrix}B & X_0\end{bmatrix},C\right]$ to obtain the projection matrices $V_r,\,W_r$ and Hankel singular values $\eta_1,\,\ldots,\,\eta_n$. The ROM is obtained by~\eqref{eq:ROM} and~\eqref{eq:proj} using the modified projection matrices $V_r,\,W_r$. This method achieves an a posteriori error bound
\begin{multline} \label{eq:bound_HeiRA11}
 \left\|y-y_r\right\|_{\LL_2} \le 2(\eta_{r+1} + \cdots + \eta_n)\left\|u\right\|_{\LL_2}\\
+3\cdot 2^{-1/3}(\eta_{r+1}+\cdots+\eta_n)^{2/3}\left(\left\|L^\T AX_0\right\|_2+{\|\Sigma_r^{\frac12}A_rX_{0,r}\|}_2\right)^{1/3}\left\|z_0\right\|_2,
\end{multline}
where $L$ is such that $L L^\T$ is the observability Gramian of the expanded system, and $\Sigma_r = {\rm diag}(\eta_1,\ldots,\eta_r)$. This bound has two disadvantages: it involves the reduced system and the \HSV appear with exponent $2/3$.

The previously discussed methods are joint--projection methods, i.\,e., they use a single projection to produce ROMs in which both the responses to the input and initial state are treated simultaneously. On the other hand, there are seperate--projection methods producing ROMs in which the two parts are reduced seperately. This leads to a ROM that consists of two decoupled subsystems as in the following method. 

\paragraph{The method BT-BT of Beattie, Gugercin, and Mehrmann~\cite{BeaGM17}} produces a seperate--projection ROM.
Let
\begin{align*}
 [V_{k},W_{k},\sigma_1,\ldots,\sigma_n] &= \operatorname{BT}(A,B,C,k), \\
 \big[\hat{V}_{\ell},\hat{W}_{\ell},\theta_1,\ldots,\theta_n\big] &= \operatorname{BT}(A,X_0,C,\ell).
\end{align*}
Then a reduced order model of order $r = k+\ell$ is constructed as in~\eqref{eq:ROM} with
\begin{align*}
 A_r &:= \begin{bmatrix} W_{k}^\T A V_{k} & 0 \\ 0 & \hat{W}_{\ell}^\T A \hat{V}_{\ell} \end{bmatrix}, \quad B_r = \begin{bmatrix} W_{k}^\T B \\ 0 \end{bmatrix}, \quad C_r = \begin{bmatrix} CV_{k} & C\hat{V}_{\ell} \end{bmatrix}, \\
 D_r &:= D, \quad X_{0,r} := \begin{bmatrix} 0 \\ \hat{W}_{\ell}^\T X_0 \end{bmatrix}, \quad f \equiv 0.
\end{align*}
Let $\big[A_{\rm b},{X}_{0,\rm b},C_{\rm b}\big]$ be a fully balanced realization of $[A,X_0,C]$ and assume that $Y$ solves the Sylvester equation
\begin{equation*}
 A_{\rm b}^\T Y + Y\begin{bmatrix}I_{\ell} & 0\end{bmatrix} A_{\rm b} \begin{bmatrix}I_{\ell} \\ 0\end{bmatrix} + C_{\rm b}^\T C_{\rm b}\begin{bmatrix}I_{\ell} \\ 0\end{bmatrix} =0.
\end{equation*}
With 
\begin{equation*}
 T := [t_{i,j}] = X_{0,\rm b}X_{0,\rm b}^\T + 2 Y \begin{bmatrix} I_\ell & 0 \end{bmatrix} A_{\rm b},
\end{equation*}
this method achieves an a posteriori error bound
\begin{equation} \label{eq:bound_BeaGM17}
 \left\|y-y_r\right\|_{\LL_2} \le 2(\sigma_{k+1} + \cdots + \sigma_n)\left\|u\right\|_{\LL_2}+ \left(t_{\ell+1,\ell+1}\theta_{\ell+1} + \cdots + t_{n,n}\theta_n\right)^{1/2}\left\|z_0\right\|_2.
\end{equation}

This bound has several disadvantages: even though, typically the values of $t_{i,i}$ for $i=\ell+1,\,\ldots,\,n$ are small, the \HSV appear with an exponent, here $1/2$.
Moreover, a fully balanced realization of $[A,X_0,C]$ is necessary, whose computation is expensive and can be numerically unstable. Also, the matrix $T$ depends on the reduced order $\ell$, making deciding on $\ell$ difficult a priori.

\paragraph{Singular perturbation approximations} are another class of model reduction techniques that are somewhat related to balanced truncation. Recently, the paper by Daragmeh, Hartmann, and Qatanani~\cite{DarHQ19} suggest a singular perturbation approximation for systems with nonzero initial condition. There, the authors provide another a posteriori error bound that is in the flavor of the one of~\cite{BeaGM17} and which can be computed by solving a Sylvester equation that contains data of the reduced-order model.

\section{Proposed Method}
In this section we discuss the main contribution of our paper. Here we derive two kinds of ROMs. The first one is a joint projection ROM in which a system with expanded system matrix is reduced by BT and in which both the system responses to the input and the initial condition are reduced at once. This ROM admits the error bound~\eqref{eq:bound_share}. Thereafter, we discuss a separate projection ROM in which both responses are reduced individually which leads to the error bound~\eqref{eq:bound_sep}. Since both ROMs are depending on design parameters we will have a detailed discussion about their interpretation and choice and we will discuss how to construct the ROMs in practice.

\subsection{Joint Projection ROMs}\label{sec:method}
Our first method consists of applying BT to a system with expanded input matrix. More precisely, the projection matrices $V_r,\,W_r$ and the {\HSV} $\eta_1,\,\ldots,\,\eta_n$ are obtained by
\begin{equation}\label{eq:BT_exp_joint}
[V_r,W_r,\eta_1,\ldots,\eta_n] = \operatorname{BT}\left(A, \begin{bmatrix}B & \frac1{\beta\sqrt{2\alpha}}(A+\alpha I_n)X_0\end{bmatrix},C,r\right),
\end{equation}
where $\alpha$ and $\beta$ are real positive parameters. 

The ROM is then given by~\eqref{eq:ROM}, where the reduced matrices $A_r$, $B_r$, $C_r$, and $D_r$ are given by~\eqref{eq:proj_abcd}, and $X_{0,r}$ and $f$ (in contrast to~\eqref{eq:proj_x0f}) are given by
\begin{subequations}\label{eq:Xfr}
\begin{align}
 X_{0,r} &:= (A_r+\alpha I_r)^{-1}W_r^\T(A+\alpha I_n)X_0   \label{eq:x0r}\\
 f(t) &:= F_rz_0{\rm e}^{-\alpha t}\quad\text{with } F_r:=CX_0-C_rX_{0,r}. \label{eq:fr}
\end{align}
\end{subequations}
We call this reduction method \emph{joint--projection decaying shift balanced truncation}, or shortly \textbf{jShiftBT}, because the derivation of this method involves a decaying shift of the system state (cf. the proof of the following theorem).

For this ROM we will prove the above mentioned error bound in the following theorem.
\begin{theorem}\label{thm:bound}
Let the LTI system~\eqref{eq:LTI} be asymptotically stable. Let $\alpha,\,\beta$ be real positive scalars. If $\eta_r > \eta_{r+1}$, then the ROM~\eqref{eq:ROM} defined by~\eqref{eq:BT_exp_joint},~\eqref{eq:proj_abcd}, and~\eqref{eq:Xfr} is asymptotically stable and the error is bounded by~\eqref{eq:bound_share}, provided that $-\alpha$ is not an eigenvalue of $A_r$. Moreover, we have $y_r(0) = y(0)$.
\end{theorem}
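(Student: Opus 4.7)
The plan is to reduce the statement to the standard BT error bound \eqref{eq:bt_bound} via a time-dependent state shift, as suggested by the method's name. I would first introduce the shifted state $\tilde{x}(t) := x(t) - \mathrm{e}^{-\alpha t}X_0z_0$, which satisfies $\tilde{x}(0) = 0$ together with
\begin{align*}
 \dot{\tilde{x}}(t) &= A\tilde{x}(t) + Bu(t) + (A+\alpha I_n)X_0 z_0\,\mathrm{e}^{-\alpha t},\\
 y(t) &= C\tilde{x}(t) + Du(t) + CX_0 z_0\,\mathrm{e}^{-\alpha t}.
\end{align*}
Treating the decaying signal $w(t) := \beta\sqrt{2\alpha}\,\mathrm{e}^{-\alpha t}z_0$ as an auxiliary input and absorbing the normalization factor into the input and feedthrough matrices recasts this as a homogeneous LTI system with input $\tilde{u}(t) := \begin{bmatrix}u(t) \\ w(t)\end{bmatrix}$, input matrix $\tilde{B} := \begin{bmatrix}B & \tfrac{1}{\beta\sqrt{2\alpha}}(A+\alpha I_n)X_0\end{bmatrix}$, and feedthrough $\tilde{D} := \begin{bmatrix}D & \tfrac{1}{\beta\sqrt{2\alpha}}CX_0\end{bmatrix}$. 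A direct computation yields $\|w\|_{\LL_2} = \beta\|z_0\|_2$, so that $\|\tilde{u}\|_{\LL_2} \le \|u\|_{\LL_2} + \beta\|z_0\|_2$.

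The second step is to apply standard BT to $(A,\tilde{B},C)$ with the projection matrices from \eqref{eq:BT_exp_joint}, retaining $\tilde{D}$ as the feedthrough of the reduced shifted system. Since $A$ is asymptotically stable and $\eta_r > \eta_{r+1}$, the reduced matrix $A_r$ inherits asymptotic stability \cite[Section~7.2]{Ant05}, which already settles the stability claim. The standard BT bound \eqref{eq:bt_bound} then gives
\begin{equation*}
 \|y - \tilde{y}_r\|_{\LL_2} \le 2(\eta_{r+1}+\cdots+\eta_n)\|\tilde{u}\|_{\LL_2} \le 2(\eta_{r+1}+\cdots+\eta_n)\bigl(\|u\|_{\LL_2}+\beta\|z_0\|_2\bigr),
\end{equation*}
where $\tilde{y}_r$ denotes the output of the reduced shifted system.

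It remains to verify that the ROM \eqref{eq:ROM} with $X_{0,r}$ and $f$ as in \eqref{eq:Xfr} produces exactly $\tilde{y}_r$ after undoing the shift. The natural ansatz $x_r(t) := \tilde{x}_r(t) + X_{0,r}\mathrm{e}^{-\alpha t}z_0$ satisfies $x_r(0)=X_{0,r}z_0$, and differentiating it produces a residual forcing term $\bigl[W_r^\T(A+\alpha I_n)X_0 - (A_r+\alpha I_r)X_{0,r}\bigr]\mathrm{e}^{-\alpha t}z_0$, which vanishes precisely when $X_{0,r}$ is chosen as in \eqref{eq:x0r}; here the hypothesis that $-\alpha$ is not an eigenvalue of $A_r$ enters to make $A_r+\alpha I_r$ invertible. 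Matching the output of the ansatz with $\tilde{y}_r$ then forces the correction $f(t) = (CX_0 - C_rX_{0,r})z_0\,\mathrm{e}^{-\alpha t}$ of \eqref{eq:fr}, and the identity $y_r(0) = C_rX_{0,r}z_0 + Du(0) + F_rz_0 = CX_0z_0 + Du(0) = y(0)$ follows by direct substitution.

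The main obstacle is conceptual rather than technical: one must identify the correct scaling of the auxiliary input (the factor $\beta\sqrt{2\alpha}$) so that $\|w\|_{\LL_2}$ contributes exactly the $\beta\|z_0\|_2$ term advertised in \eqref{eq:bound_share}. Once this scaling and the decaying shift are in place, the remainder is routine algebraic bookkeeping to match the ROM in the original coordinates with the reduced shifted system.
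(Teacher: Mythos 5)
Your proposal is correct and follows essentially the same route as the paper's own proof: the decaying shift $\tilde{x}(t)=x(t)-x_0\mathrm{e}^{-\alpha t}$, the rescaled auxiliary input $\beta\sqrt{2\alpha}\,\mathrm{e}^{-\alpha t}z_0$ with $\LL_2$-norm $\beta\|z_0\|_2$, the standard BT bound on the expanded homogeneous system, and the unshifting step that determines $X_{0,r}$ and $f$ exactly as in \eqref{eq:Xfr}. No gaps.
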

\begin{proof}
The proof consists of the derivation of the ROM and proceeds in three steps.

\emph{Step 1:} First, similarly to the method of Baur, Benner, and Feng~\cite{BauBF14}, the state is shifted to $\tilde{x}(t):=x(t)-x_0{\rm e}^{-\alpha t}$, i.\,e., the shift decays with rate $\alpha$.
Then $\tilde{x}(0)=0$ and for $\dot{\tilde{x}}(\cdot)$ we obtain
\begin{align*}
\dot{\tilde{x}}(t) & =\dot{x}(t)+\alpha x_0{\rm e}^{-\alpha t} \\
&= Ax(t)+Bu(t)+\alpha x_0{\rm e}^{-\alpha t}\\
&= A\big(\tilde{x}(t)+x_0{\rm e}^{-\alpha t}\big)+Bu(t)+\alpha x_0{\rm e}^{-\alpha t} \\
&= A\tilde{x}(t)+Bu(t)+(A+ \alpha I_n)x_0{\rm e}^{-\alpha t}.
\end{align*}
For the output we get
\begin{equation*}
 y(t) = C\big(\tilde{x}(t)+x_0{\rm e}^{-\alpha t}\big)+Du(t)=C\tilde{x}(t)+Du(t)+\frac{\beta\sqrt{2\alpha}}{\beta\sqrt{2\alpha}}CX_0z_0{\rm e}^{-\alpha t}.
\end{equation*}
Thus we obtain a linear system with homogeneous initial condition, i.\,e.,
\begin{align*} 
 \dot{\tilde{x}}(t) &= A\tilde{x}(t) + \begin{bmatrix} B & \frac1{\beta\sqrt{2\alpha}}(A+\alpha I_n)X_0 \end{bmatrix} \tilde{u}(t), \quad \tilde{x}(0)=0, \\
 y(t) &= C\tilde{x}(t) + \begin{bmatrix} D & \frac1{\beta\sqrt{2\alpha}}CX_0 \end{bmatrix} \tilde{u}(t) \quad \text{with }\tilde{u}(t) = \begin{bmatrix}u(t) \\ \beta\sqrt{2\alpha}{\rm e}^{-\alpha t}z_0 \end{bmatrix}.
\end{align*}

\emph{Step 2:} We apply standard BT to this system, which amounts to~\eqref{eq:BT_exp_joint} and obtain the ROM
\begin{subequations}
\begin{align}
 \dot{\tilde{x}}_r(t) &= A_r \tilde{x}_r(t) + \begin{bmatrix} B_r & \frac1{\beta\sqrt{2\alpha}}W_r^\T(A+\alpha I_n)X_0 \end{bmatrix} \tilde{u}(t), \quad \tilde{x}_r(0) = 0  \label{eq:ROMproofx} \\
     y_r(t) &= C_r\tilde{x}_r(t)+ \begin{bmatrix}D & \frac1{\beta\sqrt{2\alpha}}CX_0 \end{bmatrix} \tilde{u}(t)\label{eq:ROMproofy}
\end{align}
as in~\eqref{eq:proj}.
\end{subequations}
For $t=0$ we have
\begin{equation*}
 y_r(0) = C_r\tilde{x}_r(0) + \begin{bmatrix}D & \frac1{\beta\sqrt{2\alpha}}CX_0 \end{bmatrix} \tilde{u}(0)
=Du(0)+CX_0z_0=y(0).
\end{equation*}
Since $\eta_r > \eta_{r+1}$ by assumption, the ROM is asymptotically stable~\cite[Section~7.2]{Ant05}.
If $u \in \LL_2^m$, then we have $\tilde{u} \in \LL_2^{m+q}$ and therefore, $y,\,y_r \in \LL^p$ and by~\eqref{eq:bt_bound} we have 
\begin{equation}\label{eq:boundtilde}
 \left\|y-y_r\right\|_{\LL_2}\le 2(\eta_{r+1}+\cdots +\eta_n)\left\|\tilde{u}\right\|_{\LL_2}.
\end{equation}
Inserting
\begin{equation*}
\left\| \tilde{u} \right\|_{\LL_2} \le \left\| u \right\|_{\LL_2} + \beta \big\|\sqrt{2\alpha}{\rm e}^{-\alpha \cdot }z_0\big\|_{\LL_2} =\left\|u\right\|_{\LL_2} + \beta \left\|z_0\right\|_2 \big\|\sqrt{2\alpha}{\rm e}^{-\alpha \cdot} \big\|_{\LL_2}
=\left\|u\right\|_{\LL_2} + \beta \left\|z_0\right\|_2
\end{equation*}
into~\eqref{eq:boundtilde} yields the claimed bound~\eqref{eq:bound_share}.

\emph{Step 3:} We set $x_r(t) := \tilde{x}_r(t) + x_r(0){\rm e}^{-\alpha t}$ (i.\,e., we ``unshift'' $\tilde{x}_r(\cdot)$).
Here we set $x_r(0) = X_{0,r}z_0$ for an $X_{0,r}$ that is yet to be determined. Putting this $x_r(\cdot)$ into~\eqref{eq:ROMproofx} yields
\begin{align*}
 \dot{x}_r(t) &= A_rx_r(t) + B_ru(t) + \left(W_r^\T(A+\alpha I_n)X_0-(A_r+\alpha I_r\big)X_{0,r}\right)z_0{\rm e}^{-\alpha t},
\end{align*}
which reduces to~\eqref{eq:ROMx} by choosing $X_{0,r}$ as in~\eqref{eq:x0r}.
Inserting $x_r(\cdot)$ into~\eqref{eq:ROMproofy} yields
\begin{align*}
 y_r(t) &= C_r x_r(t) + \begin{bmatrix} D & \frac1{\beta\sqrt{2\alpha}}CX_0 \end{bmatrix} \tilde{u}(t) -C_rX_{0,r}z_0{\rm e}^{-\alpha t} \\
        &= C_rx_r(t) + Du(t) + F_rz_0{\rm e}^{-\alpha t},
\end{align*}
which is~\eqref{eq:ROMy} for $F_r$ as in~\eqref{eq:fr}.
This concludes the proof. 
\end{proof}

\begin{remark}
The property $y_r(0)=y(0)$ is shared with \textbf{TrlBT}.
The other methods do not guarantee this.
\end{remark}
\begin{remark}
The term $f(t) = F_rz_0{\rm e}^{-\alpha t}$ in our ROM is non-standard.
It consists of a vector times a scalar-valued exponential function. Hence it is easy to compute.
However, if a ROM without $f$ is desired we can get rid of it at the price of expanding the ROM.
Indeed, the ROM may be reformulated by appending $\phi(t):={\rm e}^{-\alpha t}$ to $x_r(t)$. This gives
\begin{align*}
\begin{bmatrix}\dot{x}_r(t)\\ \dot{\phi}(t)\end{bmatrix}
&=\begin{bmatrix}A_r&0\\ 0&-\alpha\end{bmatrix}
\begin{bmatrix}x_r(t)\\ \phi(t)\end{bmatrix}
+\begin{bmatrix}B_r\\ 0\end{bmatrix}u, \quad
\begin{bmatrix}x_r(0)\\ \phi(0)\end{bmatrix}=
\begin{bmatrix}X_{0,r}z_0\\ 1\end{bmatrix}\\
y_r &= \begin{bmatrix}C_r & F_rz_0 \end{bmatrix}
\begin{bmatrix}x_r(t) \\ \phi(t)\end{bmatrix}
+D_ru(t).
\end{align*}

However, the initial value is not linear in $z_0$ anymore (but affine linear).
Also, the output matrix of the ROM depends on $z_0$, which may be undesirable.
These disadvantages can be removed by reformulating the ROM by appending $\psi(t):=R_rz_0{\rm e}^{-\alpha t}$ to $x_r(t)$. This yields
\begin{align*}
\begin{bmatrix}\dot{x}_r(t)\\ \dot{\psi}(t)\end{bmatrix}
&=\begin{bmatrix}A_r&0\\ 0&-\alpha I\end{bmatrix}
\begin{bmatrix}x_r(t) \\ \psi(t)\end{bmatrix}
+\begin{bmatrix}B_r\\ 0\end{bmatrix}u(t), \quad
\begin{bmatrix}x_r(0)\\ \psi(0)\end{bmatrix}=
\begin{bmatrix}X_{0,r}\\ R_r\end{bmatrix}z_0\\
y_r(t) &= \begin{bmatrix} C_r & L_r \end{bmatrix}
\begin{bmatrix}x_r(t) \\ \psi(t) \end{bmatrix}
+D_ru(t),
\end{align*}
where $F_r =: L_rR_r$ is a rank-revealing decomposition of $F_r$. This ROM is completely in standard form, but is of order $r+ \operatorname{rank}(F_r)$ which is bounded from above by $r + \min\{p,q\}$.
\end{remark}
\begin{remark}
As a case study consider the case $q=1$ and $X_0$ being an eigenvector of $A$ corresponding to an eigenvalue $-\alpha \in \R$.
Then $(A+\alpha I)X_0=0$ and our method gives the same projection matrices as standard BT.
Now if $X_0$ happens to be both, an uncontrollable mode and easy to observe, it will be truncated in the ROM, i.\,e., $W_r^\T X_0=0$ and $X_{0,r}=0$.
On the other hand it has, as an initial value, significant influence on the output $y(\cdot)$.
How can our method work in this situation when standard BT does not? The answer lies in the extra term $f(\cdot)$ that reduces to $CX_0e^{-\alpha t}$ in this case, i.\,e., it reintroduces the mode that has been truncated by BT.
\end{remark}

\subsection{Separate Projection ROMs}
\label{sec:composite}
The ROM constructed in Subsection~\ref{sec:method} is a joint projection ROM 
where one has to specify the parameter $\beta$ to put an emphasis either on the input or on the initial condition.
However, the reduction error may be large, if e.\,g., a high weight is put on the input error ($\beta$ is large) and if $\left\| z_0 \right\|_2$ is large, since then the expression $\left\|u\right\|_{\LL_2}+\beta\left\|z_0\right\|_2$ is large, too. 

So the motivation of this subsection is to reduce the two parts of the system individually and to construct a seperate--projection ROM out of the two reduced subsystems similarly as in~\cite{BeaGM17}.

To begin with, we write the output of the system~\eqref{eq:LTI} as
\begin{equation*}
 y(t) = \underbrace{C\mathrm{e}^{At}x_0}_{=: y_{x_0}(t)} + \underbrace{\int_{0}^t C\mathrm{e}^{A(t-\tau)}Bu(\tau) \mathrm{d}\tau + Du(t)}_{=:y_u(t)}.
\end{equation*}
The output component $y_u(\cdot)$ is given by the output of the system
\begin{align} \label{eq:LTIhom}
 \begin{split}
   \dot{x}(t) &= Ax(t) + Bu(t), \quad x(0) = 0, \\
       y_u(t) &= Cx(t) + Du(t),
 \end{split}     
\end{align}
while $y_{x_0}(\cdot)$ is the output of the system
\begin{align} \label{eq:LTIinh}
 \begin{split}
   \dot{\hx}(t) &= A \hx(t), \quad \hx(0) = x_0 = X_0z_0, \\
           y_{x_0}(t) &= C \hx(t).
 \end{split}
\end{align}
Now we reduce the system~\eqref{eq:LTIhom} using standard balanced truncation, i.\,e.,
\begin{equation*}
 [V_{k},W_{k},\sigma_1,\,\ldots,\sigma_n] = \operatorname{BT}(A,B,C,k)
 \end{equation*}
 and the ROM is given by 
 \begin{align*}
  \dot{x}_k(t) &= A_k x_k(t) + B_k u(t), \quad x_k(0) = 0, \\
        y_{u,k}(t) &= C_k x_k(t) + D_k u(t)
 \end{align*}
 with $A_k = W_k^\T A V_k$, $B_k = W_k^\T B$, $C_k = CV_k$, and $D_k = D$.

The system~\eqref{eq:LTIinh} is reduced using the approach from Subsection~\ref{sec:method} for $\beta = 1$. This results in performing balanced trunction on a shifted system, i.\,e.,
 \begin{equation}\label{eq:BT_exp_sep}
 \big[\hat{V}_{\ell},\hat{W}_{\ell},\theta_1,\,\ldots,\theta_n\big] = \operatorname{BT}\left(A,\frac{1}{\sqrt{2\alpha}}(A+\alpha I_n)X_0,C,\ell\right)
\end{equation}
and the corresponding ROM is 
\begin{align*}
 \dot{\hx}_{\ell}(t) &= \hat{A}_{\ell} \hx_\ell(t), \quad \hx_\ell(0) = X_{0,\ell} z_0, \\ 
         y_{x_0,\ell}(t) &= \hat{C}_{\ell} \hx_\ell(t) + \hat{F}_\ell z_0 \mathrm{e}^{-\alpha t},
\end{align*}
with $\hat{A}_\ell = \hat{W}_\ell^\T A \hat{V}_\ell$, $\hat{C}_\ell = C \hat{V}_\ell$, $X_{0,\ell} = \big(\hA_\ell+\alpha I_\ell\big)^{-1}\hat{W}_\ell^\T(A+\alpha I_n)X_0$, and $\hat{F}_\ell = CX_0 - \hat{C}_\ell X_{0,\ell}$.

With the reduced subsystems above we can now construct the overall ROM
\begin{align}\label{eq:compositeROM}
\begin{split}
 \begin{bmatrix} \dot{x}_k(t) \\ \dot{\hx}_\ell(t) \end{bmatrix} &= \begin{bmatrix} A_k & 0 \\ 0 & \hat{A}_\ell \end{bmatrix} \begin{bmatrix} x_k(t) \\ \hat{x}_\ell(t) \end{bmatrix} + \begin{bmatrix} B_k \\ 0 \end{bmatrix} u(t), \quad \begin{bmatrix} x_k(0) \\ \hat{x}_\ell(0) \end{bmatrix} = \begin{bmatrix} 0 \\ X_{0,\ell} z_0 \end{bmatrix}, \\
 y_r(t) &:= y_{u,k}(t) + y_{x_0,\ell}(t) = \begin{bmatrix} C_k & \hat{C}_\ell \end{bmatrix}  \begin{bmatrix} x_k(t) \\ \hat{x}_\ell(t) \end{bmatrix} + D_k u(t) + \hat{F}_\ell z_0 \mathrm{e}^{-\alpha t}.
 \end{split}
\end{align}
We call this method \emph{separate--projection decaying shift balanced truncation}, shortly \textbf{sShiftBT}.

Now the following result is an immediate consequence of a combination of the standard BT error bound and Theorem~\ref{thm:bound}.
\begin{theorem}
Let the LTI system~\eqref{eq:LTI} be asymptotically stable. Let $\alpha$ be a real positive scalar. If $\sigma_k > \sigma_{k+1}$ and $\theta_\ell > \theta_{\ell+1}$, then the ROM~\eqref{eq:compositeROM} is asymptotically stable and the error is bounded by~\eqref{eq:bound_sep}, provided that $-\alpha$ is not an eigenvalue of $\hat{A}_\ell$. Moreover, we have $y_r(0) = y(0)$.
\end{theorem}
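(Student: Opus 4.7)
The proof is essentially a bookkeeping exercise that combines the two error bounds already at our disposal. The plan is to split the error into an input-driven part and an initial-value-driven part, bound each separately, and then verify the auxiliary claims (asymptotic stability and matching at $t=0$).

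First, by linearity of the original system and by construction of the ROM \eqref{eq:compositeROM}, I write
\begin{equation*}
 y(t) - y_r(t) = \bigl(y_u(t) - y_{u,k}(t)\bigr) + \bigl(y_{x_0}(t) - y_{x_0,\ell}(t)\bigr),
\end{equation*}
where $y_u,\,y_{x_0}$ are the outputs of the subsystems \eqref{eq:LTIhom} and \eqref{eq:LTIinh}, and $y_{u,k},\,y_{x_0,\ell}$ are the outputs of their respective reduced systems. Applying the triangle inequality in $\LL_2^p$, it suffices to bound the two summands separately.

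For the first summand, the reduced system $(A_k,B_k,C_k,D_k)$ is a standard balanced truncation ROM of \eqref{eq:LTIhom} with zero initial condition, so the classical BT bound \eqref{eq:bt_bound} gives
\begin{equation*}
 \left\|y_u - y_{u,k}\right\|_{\LL_2} \le 2(\sigma_{k+1}+\cdots+\sigma_n)\left\|u\right\|_{\LL_2}.
\end{equation*}
For the second summand, the system \eqref{eq:LTIinh} is an instance of \eqref{eq:LTI} with $B=0$, $D=0$, and $u \equiv 0$, and the reduced system $(\hat A_\ell,\hat C_\ell,X_{0,\ell},\hat F_\ell)$ is exactly the jShiftBT ROM of \eqref{eq:LTIinh} with parameter $\beta = 1$ (using \eqref{eq:BT_exp_sep} and the formulas \eqref{eq:Xfr}). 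Hence Theorem~\ref{thm:bound}, applied with $u \equiv 0$ and $\beta = 1$, yields
\begin{equation*}
 \left\|y_{x_0} - y_{x_0,\ell}\right\|_{\LL_2} \le 2(\theta_{\ell+1}+\cdots+\theta_n)\left\|z_0\right\|_2.
\end{equation*}
Summing the two estimates gives \eqref{eq:bound_sep}.

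It remains to verify the side conditions. Asymptotic stability of the overall ROM follows because the state matrix in \eqref{eq:compositeROM} is block diagonal: $A_k$ is asymptotically stable by the standard BT result since $\sigma_k > \sigma_{k+1}$ \cite[Section~7.2]{Ant05}, and $\hat A_\ell$ is asymptotically stable by Theorem~\ref{thm:bound} since $\theta_\ell > \theta_{\ell+1}$ and $-\alpha$ is assumed not to be an eigenvalue of $\hat A_\ell$. Finally, for $y_r(0) = y(0)$, I evaluate \eqref{eq:compositeROM} at $t=0$: since $x_k(0) = 0$ and using $\hat F_\ell = CX_0 - \hat C_\ell X_{0,\ell}$,
\begin{equation*}
 y_r(0) = \hat C_\ell X_{0,\ell} z_0 + D u(0) + \hat F_\ell z_0 = D u(0) + C X_0 z_0 = y(0).
\end{equation*}
I do not anticipate a genuine obstacle here; the only mild point is to recognize \eqref{eq:LTIinh} together with \eqref{eq:BT_exp_sep} as a legitimate input to Theorem~\ref{thm:bound} with $\beta=1$, so that the $\beta\left\|z_0\right\|_2$ term in \eqref{eq:bound_share} reduces cleanly to $\left\|z_0\right\|_2$ in the contribution to \eqref{eq:bound_sep}.
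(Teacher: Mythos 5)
Your proposal is correct and follows exactly the route the paper intends: the paper gives no explicit proof, stating only that the result is ``an immediate consequence of a combination of the standard BT error bound and Theorem~\ref{thm:bound},'' and your argument (splitting $y-y_r$ into the input-driven and initial-value-driven parts, applying \eqref{eq:bt_bound} to the former and Theorem~\ref{thm:bound} with $\beta=1$, $u\equiv 0$ to the latter, then checking block-diagonal stability and the value at $t=0$) is precisely that combination, carried out in full.
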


\begin{remark} An advantage of the separate projection ROM is that it is a feasible ROM for all possible inputs and initial values. In contrast to that one may have to construct several joint projection ROMs for different values of $\beta$ in order to cover all possible inputs and initial values one wants to simulate the model with.

Moreover, note that the reduced order of the separate projection ROM is $r = k+\ell$. However, since the reduced state matrix is of block-diagonal structure, the two reduced subsystems can be simulated individually. In particular, we have $z_0 = \sum_{i=1}^q \zeta_i x_{0}^{(i)}$ with a basis $\left\{ x_{0}^{(1)},\,\ldots,\,x_{0}^{(q)} \right\}$ of $\operatorname{im} X_{0,\ell}$. Thus, if the model has to be simulated for a lot of different initial conditions, one could precompute
\begin{equation*}
 {y}_{x_0,\ell}^{(i)}(t) := \hat{C}_\ell \mathrm{e}^{\hat{A}_\ell t} x_{0}^{(i)}, \quad i = 1,\,\ldots,\,q.
\end{equation*}
Then for the particular initial condition $x_0 = X_0z_0$,
\begin{equation*}
 {y}_{x_0,\ell}(t) = \sum_{i=1}^q \zeta_i {y}_{x_0,\ell}^{(i)}(t)  
\end{equation*}
can be evaluated more efficiently and the online costs are dominated by a ROM of  reduced order $k$.
\end{remark}

\subsection{Discussion of the Parameters $\alpha$ and $\beta$}\label{sec:disc}
All that remains is to choose the two parameters $\alpha$ and $\beta$ in the method.
Let us begin by noting that for $\alpha\rightarrow 0$ our decaying--shift approach $\tilde{x}(t)=x(t)-x_0{\rm e}^{-\alpha t}$ reduces to the constant one of \textbf{TrlBT}. 
Also, for $\alpha \rightarrow \infty$, the function $\left(\frac{1}{\sqrt{2\alpha}}{\rm e}^{-\alpha \cdot}\right)^2$ converges to Dirac's $\delta$ impulse used in \textbf{AugBT} and \textbf{BT-BT}. 
Moreover, for $\beta \to \infty$ we obtain the standard BT ROM.
In that sense our approach contains the existing ones.

However, for all these extreme cases of $\alpha$ or $\beta$ approaching zero or $\infty$, our error bound gets huge! This is either, because $\beta$ appears explicitly in it, or because the expanded input matrix $B_{\rm exp} := \begin{bmatrix} B & \frac{1}{\beta\sqrt{2\alpha}}(A+\alpha I_n)X_0\end{bmatrix}$ contains large elements leading to large \HSV.
So, good values for $\alpha$ and $\beta$ are neither too large nor too small.

With $c_u:=2(\eta_{r+1}+\cdots+\eta_n)$ and $c_{x_0}:= \beta c_u$ we write the error bound~\eqref{eq:bound_share} as
\begin{equation}\label{eq:bound_with_cu} 
 \left\| y-y_r \right\|_{\LL_2} \le c_u\cdot\left\|u\right\|_{\LL_2} + c_{x_0} \cdot {\|z_0\|}_2.
\end{equation}
Note that all, $c_u$, $c_{x_0}$, and the \HSV $\eta_i$ depend on $\alpha$ and $\beta$.
We will also write $c_u(\alpha,\beta)$ and $c_{x_0}(\alpha,\beta)$ whenever we want to emphasize this dependency.

Observe that both summands of~\eqref{eq:bound_with_cu} are influenced by $\alpha$ in the same way. Hence $\alpha$ is a tuning parameter, i.\,e., optimizing it for $c_u$ also  improves the value of $c_{x_0}$.
An ad hoc heuristic candidate is the choice
\begin{equation*}
 \alpha_{{\rm heur}}=\frac{{\|AX_0\|}_{\rm F}}{{\|X_0\|}_{\rm F}}
\end{equation*}
which minimizes $\big\|\tfrac{1}{\sqrt{\alpha}}(A+\alpha I_n)X_0\big\|_{\rm F}$, i.\,e., the norm of the extra block in $B_{\rm exp}$. 
Another possibility that certainly comes to mind is the negative spectral abscissa
\begin{equation*}
 \tilde{\alpha}_{{\rm heur}} = -\max_{\lambda\in \Lambda(A)}\operatorname{Re}(\lambda).
\end{equation*}
With this choice, the shift $x(t)-\tilde{x}(t) = x_0{\rm e}^{-\alpha t}$ decays at the same rate as the most stable mode of the homogeneous system $\dot{x}(t) = A x(t)$.
Of course, $\alpha$ can also be obtained by numerical optimization methods, see Subsection~\ref{sec:alpha} for details.
We will assess these aproaches with numerical examples in Section~\ref{sec:eval}.

For $\beta$, things are different because it influences $c_u$ and $c_{x_0}$ in different ways. Note that by our construction of $B_{\rm exp}$, $c_u$ is monotonically decreasing in $\beta$ whereas $c_{x_0} = \beta c_u$ is monotonically increasing.
Thus, by increasing $\beta$ we improve the input part of the error bound $c_u\cdot{\|u\|}_{\LL_2}$, but worsen the initial value part $c_{x_0}\cdot {\|z_0\|}_2$, and vice versa.
So, if nothing is known about $u$ and $z_0$, \emph{then $\beta$ should be considered a design parameter that is provided by the user}.
For example, if we want $c_u$ to be a hundred times smaller than $c_{x_0}$, then we set $\beta=100$.

However, in certain situations, $\beta$ can also be a tuning parameter.
Assuming that (typical or approximate) values of ${\|u\|}_{\LL_2}$ and ${\|z_0\|}_2$, or at least their ratio, are known, we can optimize the right hand side of~\eqref{eq:bound_with_cu} over $\alpha$ and $\beta$. In this situation an ad hoc value would be
\begin{equation}\label{eq:betaadhoc}
 \beta_{\rm heur}={{\|u\|}_{\LL_2}}/{{\|z_0\|}_2},
\end{equation}
as this choice balances the two summands in~\eqref{eq:bound_with_cu}.
It turns out that this choice is almost optimal with suboptimality factor at most 2 as the following lemma shows. Thus, further numerical optimization for $\beta$ is rather futile.
\begin{lem}
Let ${\|u\|}_{\LL_2}$ and ${\|z_0\|}_2$ be given and choose $\beta_{\rm heur}={{\|u\|}_{\LL_2}}/{{\|z_0\|}_2}$. For $\alpha,\,\beta > 0$ define $$e(\alpha,\beta) := c_u(\alpha,\beta) (\left\|u\right\|_{\LL_2} + \beta {\|z_0\|}_2).$$
Then for any fixed $\alpha_0 > 0$ we have
\begin{equation} \label{eq:subopt1}
 e(\alpha_0,\beta_{\rm heur}) \le 2 \min_{\beta>0} e(\alpha_0,\beta).
\end{equation}
Moreover,
\begin{equation} \label{eq:subopt2}
 \min_{\alpha >0} e(\alpha,\beta_{\rm heur}) \le 2 \min_{\alpha,\beta>0} e(\alpha,\beta).
\end{equation}
%
\end{lem}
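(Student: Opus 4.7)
My plan is to exploit the two monotonicity properties of the expanded Hankel sum $c_u(\alpha,\beta)$ that were noted just before the lemma: for fixed $\alpha$, the map $\beta \mapsto c_u(\alpha,\beta)$ is nonincreasing while $\beta \mapsto \beta\, c_u(\alpha,\beta)$ is nondecreasing. Before using these, I would give a brief justification by decomposing the controllability Gramian of the expanded system as $P(\beta)=P_1+\beta^{-2}P_2$, where $P_1$ and $P_2$ solve the Lyapunov equations associated with $B$ and with $\frac{1}{\sqrt{2\alpha}}(A+\alpha I_n)X_0$, respectively. Since the observability Gramian $Q$ does not depend on $\beta$, the squared Hankel singular values are the eigenvalues of $P(\beta)Q$, which are monotonically decreasing in $\beta$, while $\beta^2 P(\beta) Q = (\beta^2 P_1+P_2)Q$ has eigenvalues that are monotonically increasing in $\beta$. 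Taking square roots and summing over $i>r$ gives the claimed monotonicities of $c_u$ and $\beta c_u$.

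Next, set $U:=\|u\|_{\LL_2}$ and $Z:=\|z_0\|_2$, so that $\beta_{\rm heur}=U/Z$ and
\begin{equation*}
e(\alpha,\beta_{\rm heur}) = c_u(\alpha,\beta_{\rm heur})\,(U+\beta_{\rm heur}Z) = 2\,U\,c_u(\alpha,\beta_{\rm heur}).
\end{equation*}
For \eqref{eq:subopt1} with $\alpha_0>0$ fixed, I would split on the position of a competitor $\beta$ relative to $\beta_{\rm heur}$. If $\beta\le \beta_{\rm heur}$, then monotonicity of $c_u$ yields
\begin{equation*}
e(\alpha_0,\beta)\ge c_u(\alpha_0,\beta)\,U \ge c_u(\alpha_0,\beta_{\rm heur})\,U = \tfrac12\,e(\alpha_0,\beta_{\rm heur}).
\end{equation*}
If $\beta\ge\beta_{\rm heur}$, then monotonicity of $\beta c_u$ gives
\begin{equation*}
e(\alpha_0,\beta)\ge \beta\,c_u(\alpha_0,\beta)\,Z \ge \beta_{\rm heur}\,c_u(\alpha_0,\beta_{\rm heur})\,Z = U\,c_u(\alpha_0,\beta_{\rm heur}) = \tfrac12\,e(\alpha_0,\beta_{\rm heur}).
\end{equation*}
Taking the infimum over $\beta>0$ on the right then gives \eqref{eq:subopt1}.

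For \eqref{eq:subopt2}, I would simply apply \eqref{eq:subopt1} at a minimizer. Let $(\alpha^\ast,\beta^\ast)$ attain $\min_{\alpha,\beta>0}e(\alpha,\beta)$ (or, if no minimizer exists, take an infimizing sequence and pass to the limit). Then
\begin{equation*}
\min_{\alpha>0} e(\alpha,\beta_{\rm heur}) \le e(\alpha^\ast,\beta_{\rm heur}) \le 2\,e(\alpha^\ast,\beta^\ast) = 2 \min_{\alpha,\beta>0} e(\alpha,\beta).
\end{equation*}

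The only nontrivial step is the monotonicity of $\beta\mapsto\beta c_u(\alpha,\beta)$; the rest is a two-case comparison that is essentially forced by the fact that $\beta_{\rm heur}$ equalizes the two summands in $c_u(U+\beta Z)$. Once one recognizes this balancing, the factor $2$ in both bounds is really just the observation that the maximum of two equal numbers is half of their sum.
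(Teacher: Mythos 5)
Your proof is correct and follows essentially the same route as the paper's: the paper packages your two-case comparison into the single function $\mu(\alpha,\beta)=\max\{c_u(\alpha,\beta)\|u\|_{\LL_2},\,\beta c_u(\alpha,\beta)\|z_0\|_2\}$, which is minimized at $\beta_{\rm heur}$ by exactly the two monotonicity facts you invoke, and it obtains the second inequality identically by evaluating the first at the joint minimizer. Your Gramian-decomposition justification of the monotonicity of $c_u$ and $\beta c_u$ is a welcome addition that the paper only asserts informally in the preceding discussion.
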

\begin{proof}
First we show \eqref{eq:subopt1}: First, for $\alpha,\,\beta >0$ define $$\mu({\alpha},\beta):=\max\left\{c_u(\alpha,\beta)\left\|u\right\|_{\LL_2},\, c_{x_0}(\alpha,\beta) {\|z_0\|}_2\right\}.$$
Now let $\alpha_0>0$ be arbitrary but fixed. First note that $\beta_{\rm heur}$ minimizes $\mu({\alpha_0},\cdot)$ because of the monotonicity of $c_u(\alpha_0,\cdot)$ and $c_{x_0}(\alpha_0,\cdot)$ and since
\begin{equation*}
c_u(\alpha_0,\beta_{\rm heur})\left\|u\right\|_{\LL_2} = c_{x_0}(\alpha_0,\beta_{\rm heur}) {\|z_0\|}_2 = \beta_{\rm heur} c_{u}(\alpha_0,\beta_{\rm heur}) {\|z_0\|}_2.
\end{equation*}
Then for all $\beta > 0$ we obtain
\begin{equation*}
  e(\alpha_0,\beta_{\rm heur})
 = 2\mu(\alpha_0,\beta_{\rm heur})
 \le 2\mu(\alpha_0,\beta)
 \le 2e(\alpha_0,\beta).
\end{equation*}

Next we show \eqref{eq:subopt2}: Define $\alpha_* := \argmin_{\alpha > 0} e(\alpha,\beta_{\rm heur})$ and $\big(\hat{\alpha}_*,\hat{\beta}_*\big) = \argmin_{\alpha,\beta > 0} e(\alpha,\beta)$. Then with the help of \eqref{eq:subopt1} we obtain the estimate
\begin{equation*}
   e(\alpha_*,\beta_{\rm heur}) \le e\big(\hat{\alpha}_*,\beta_{\rm heur}\big) \le 2 e\big(\hat{\alpha}_*,\hat{\beta}_*\big),
\end{equation*}
which concludes the proof.
\end{proof}


\subsection{Efficient Construction of the ROMs}
If $\alpha$ and $\beta$ are known, then we can construct the ROMs by~\eqref{eq:BT_exp_joint} or~\eqref{eq:BT_exp_sep} using a BT implementation of choice. However, if the ROMs have to be determined for several choices of $\alpha$ and $\beta$, or if $\alpha$ and $\beta$ are to be determined inside an optimization loop, then this can get prohibitively expensive.
In this subsection we explain, how the reduced-order models presented in Subsections~\ref{sec:method} and~\ref{sec:composite} can be efficiently determined for many values of $\alpha$ and $\beta$.

Consider the three Lyapunov equations
\begin{subequations}\label{eq:lyap}
\begin{align}
 A P + P A^\T + B B^\T = 0, \label{eq:lyapB} \\
 A \hP + \hP A^\T + X_0 X_0^\T = 0, \label{eq:lyapX0} \\
 A^\T Q + Q A + C^\T C = 0. \label{eq:lyapC} 
\end{align}
\end{subequations}
Assume that we have computed low-rank factorizations of the solutions
\begin{equation*}
 P = R R^\T, \quad \hP = \hR \hR^\T, \quad Q = L L^\T. 
\end{equation*}
These factors are directly obtained by many established methods such as the ADI method or Krylov subspace methods (see, e.\,g.,~\cite{BenS13,Sim07}) for which well-tested software exists, e.\,g.,~\cite{BenKS20}. Recall that in Subsection~\ref{sec:method} we want to reduce the system $\left[A,\begin{bmatrix}B & \frac{1}{\beta\sqrt{\alpha}}(A + \alpha I_n) X_0 \end{bmatrix}, C \right]$, so we need to determine its controllability Gramian given by the solution of the Lyapunov equation
\begin{equation*}
 A \cP(\alpha,\beta) + \cP(\alpha,\beta) A^\T + \begin{bmatrix}B & \frac{1}{\beta\sqrt{2\alpha}}(A + \alpha I_n) X_0 \end{bmatrix}\begin{bmatrix}B & \frac{1}{\beta\sqrt{2\alpha}}(A + \alpha I_n) X_0 \end{bmatrix}^\T = 0.
\end{equation*}
By multiplying~\eqref{eq:lyapX0} by $\frac{1}{\beta\sqrt{2\alpha}}(A + \alpha I_n)$ from the left and by $\frac{1}{\beta\sqrt{2\alpha}}(A + \alpha I_n)^\T$ from the right and adding~\eqref{eq:lyapB}, we see that
\begin{equation*}
 \cP(\alpha,\beta) = P + \left(\frac{1}{\beta\sqrt{2\alpha}}(A + \alpha I_n)\right) \hat{P} \left(\frac{1}{\beta\sqrt{2\alpha}}(A + \alpha I_n)\right)^\T.
\end{equation*}
In particular, we have the factorization 
\begin{equation*}
 \cP(\alpha,\beta) = \cR(\alpha,\beta)\cR(\alpha,\beta)^\T = \begin{bmatrix} R & \frac{1}{\beta\sqrt{2\alpha}}(A + \alpha I_n) \hat{R} \end{bmatrix}\begin{bmatrix} R & \frac{1}{\beta\sqrt{2\alpha}}(A + \alpha I_n) \hat{R} \end{bmatrix}^\T.
\end{equation*}
Now the ROM is determined using the SVD of the matrix
\begin{equation}\label{eq:M}
  M(\alpha,\beta) := L^\T \cR(\alpha,\beta) = \begin{bmatrix} L^\T R & \frac{1}{\beta\sqrt{2\alpha}}L^\T A \hat{R} + \frac{\sqrt{\alpha}}{\sqrt{2}\beta} L^\T \hat{R}  \end{bmatrix}.
\end{equation}
In particular, the nonzero \HSV $\eta_i$ are the nonzero singular values of $M(\alpha, \beta)$.
Note that $L^\T R$, $L^\T A \hat{R}$, and $L^\T \hat{R}$ are independent of $\alpha$ and $\beta$ and are typically all small matrices that can be efficiently precomputed.
Note that solving the three Lyapunov equations is by far the dominant computational burden.
Computing the SVDs, even for many values of $\alpha$ and $\beta$ is comparatively cheap.
This allows computing the ROM inside an optimization loop to obtain optimal values of $\alpha$ and $\beta$. This will be done in the following section.
\begin{remark}
The construction of the separate projection ROM from Subsection~\ref{sec:composite} and \mbox{\textbf{BT-BT}} also makes use of the same three Lyapunov equations in~\eqref{eq:lyap}.
Also the ROMS of \textbf{TrlBT} and \textbf{AugBT}, while constructed differently in~\cite{BauBF14} and~\cite{HeiRA11}, could be built by solving these three equations.
\end{remark}

\subsection{Optimizing the Parameter $\alpha$}\label{sec:alpha} 
Now we return to the optimization of $\alpha$.
The value $\alpha_{\rm heur}$ is no more than an educated guess; with the results of the previous subsection we are ready to use numerical optimization machinery.
First consider the bound~\eqref{eq:bound_share}, namely
\begin{equation*}
  \left\| y - y_{r}\right\|_{\LL_2} \le \underbrace{2 \left( \eta_{r+1}(\alpha) + \ldots + \eta_n(\alpha) \right)}_{=:
  c_u(\alpha)}\left( \left\| u \right\|_{\LL_2} + \beta \left\| z_0 \right\|_2\right).
\end{equation*}
Our goal is to find $\alpha_*$ such that $c_{u}(\alpha_*)$ is minimal as this will minimize the error bound. Note that the Hankel singular values $\eta_i(\alpha)$ and hence $y_r(\cdot)$ also depend on $\beta$. However, the value of $\beta$ as well as the reduced order $r$ are fixed, so we do not list them explicitly as arguments since we only focus on the optimization of $\alpha$ in this subsection. One should however be aware of the fact that the optimization has to be repeated for every reduced order and parameter $\beta$ of interest, because the optimal value $\alpha_*$ depends on both.
  
First, recall that the nonzero Hankel singular values $\eta_i(\alpha)$ are the nonzero singular values of the matrix $M(\alpha) = \begin{bmatrix} L^\T R & \frac{1}{\beta\sqrt{2\alpha}}L^\T A \hat{R} + \frac{\sqrt{\alpha}}{\sqrt{2}\beta} L^\T \hat{R} \end{bmatrix}$.
Hence, $c_{u}(\cdot)$ is continous and piecewise smooth.
The only critical points $\alpha$, where $c_u(\cdot)$ may not be smooth, are those for which $\eta_{r}(\alpha)$ and $\eta_{r+1}(\alpha)$ coincide or where the smallest nonzero Hankel singular value goes through zero.
The latter is usually several orders of magnitude smaller than $\eta_{r+1}(\cdot)$ and hence affects $c_u(\cdot)$ only marginally. Therefore, we do not consider this case any further here. However, note that in principal, the problem of minimizing $c_u(\cdot)$ is a \emph{non-smooth problem}, meaning that local minima may be attained at points, at which $c_u(\cdot)$ is not differentiable.

Next we show that, under a weak assumption, the case $\eta_r(\alpha_0) = \eta_{r+1}(\alpha_0)$ cannot lead to a local minimum at $\alpha_0$ as summarized in the following lemma.
\begin{lem}
 Assume that $\eta_{r+1}(\cdot)$ is not differentiable at $\alpha_0 > 0$ and that $\eta_{r-1}(\alpha_0) > \eta_r(\alpha_0) = \eta_{r+1}(\alpha_0) > \eta_{r+2}(\alpha_0)$.
 Let $M(\cdot)$ have constant rank in a neighborhood of $\alpha_0$. 
 Then $\alpha_0$ is not a local minimizer of $c_u(\cdot)$.
\end{lem}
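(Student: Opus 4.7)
The plan is to show that near $\alpha_0$ the function $c_u(\alpha)$ decomposes as a smooth term minus an absolute-value kink, and then to argue that such a kink is incompatible with a local minimum. The key analytic input is that the map $\alpha \mapsto M(\alpha)^\T M(\alpha)$ is real analytic on $\alpha > 0$ by the explicit formula~\eqref{eq:M}, so by classical perturbation theory for analytic one-parameter families of Hermitian matrices (Rellich's theorem), its eigenvalues admit analytic (though generally unordered) parametrizations.

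Using this, I would isolate the two analytic branches $\lambda_+(\alpha)$ and $\lambda_-(\alpha)$ of $M(\alpha)^\T M(\alpha)$ corresponding to the coinciding squared singular values, so $\lambda_+(\alpha_0) = \lambda_-(\alpha_0) = \eta_r(\alpha_0)^2$. The strict separations $\eta_{r-1}(\alpha_0) > \eta_r(\alpha_0) = \eta_{r+1}(\alpha_0) > \eta_{r+2}(\alpha_0)$ together with the constant-rank assumption guarantee that on a neighborhood no other analytic branch collides with this pair and that $\lambda_\pm(\alpha_0) > 0$. Consequently, the ordered pair $\{\eta_r(\alpha)^2, \eta_{r+1}(\alpha)^2\}$ coincides with $\{\lambda_+(\alpha), \lambda_-(\alpha)\}$, and I may take analytic square roots $\sigma_\pm(\alpha) := \sqrt{\lambda_\pm(\alpha)}$, which yields
\begin{equation*}
 \eta_r(\alpha) + \eta_{r+1}(\alpha) = \sigma_+(\alpha) + \sigma_-(\alpha), \qquad \eta_r(\alpha) - \eta_{r+1}(\alpha) = |\sigma_+(\alpha) - \sigma_-(\alpha)|.
\end{equation*}
The remaining $\eta_i(\alpha)$ with $i \geq r+2$ contribute to $c_u$ only via the symmetric sum $2\sum_{i=r+2}^n \eta_i(\alpha)$, which is smooth at $\alpha_0$: within any cluster of indices below $r+1$ whose singular values happen to merge at $\alpha_0$, the ordered values may exchange non-smoothly, but their sum equals the sum of the corresponding analytic branches and is therefore smooth. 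Collecting terms yields
\begin{equation*}
 c_u(\alpha) = \phi(\alpha) - |\psi(\alpha)|, \qquad \psi(\alpha) := \sigma_+(\alpha) - \sigma_-(\alpha),
\end{equation*}
where $\phi$ is smooth near $\alpha_0$ and $\psi$ is smooth with $\psi(\alpha_0) = 0$.

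The assumed non-differentiability of $\eta_{r+1}(\cdot)$ at $\alpha_0$ forces $\psi'(\alpha_0) \neq 0$, since otherwise $|\psi|$ would be differentiable there. A first-order expansion then gives, for small $h$,
\begin{equation*}
 c_u(\alpha_0 + h) - c_u(\alpha_0) = \phi'(\alpha_0)\, h - |\psi'(\alpha_0)|\, |h| + O(h^2).
\end{equation*}
If $\alpha_0$ were a local minimizer, the right- and left-sided derivatives would have to satisfy $\phi'(\alpha_0) - |\psi'(\alpha_0)| \geq 0$ and $\phi'(\alpha_0) + |\psi'(\alpha_0)| \leq 0$ simultaneously, which is impossible since $|\psi'(\alpha_0)| > 0$.

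I expect the main obstacle to be the structural step: applying Rellich's theorem together with the constant-rank hypothesis to argue that exactly two analytic branches of $M(\cdot)^\T M(\cdot)$ coincide at $\alpha_0$, remain positive on a neighborhood, and are isolated from all other branches, so that the decomposition $c_u = \phi - |\psi|$ is valid. Once this is established, the conclusion follows from an elementary one-sided derivative argument.
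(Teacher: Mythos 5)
Your proposal is correct and takes essentially the same route as the paper: both arguments rest on the real analyticity of the unordered singular-value branches of $M(\cdot)$ (Rellich), the smoothness of the tail sum $\sum_{i\ge r+2}\eta_i(\cdot)$ under the gap and constant-rank assumptions, and the observation that the ordered pair $\eta_r,\eta_{r+1}$ swaps analytic branches at $\alpha_0$, producing a strict downward kink in $c_u(\cdot)$ that is incompatible with a local minimum. Your decomposition $c_u=\phi-|\psi|$ with $\psi'(\alpha_0)\neq 0$ is just a repackaging of the paper's argument that the one-sided derivative of $c_u(\cdot)$ jumps strictly downward at $\alpha_0$.
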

\begin{proof}
Since the singular value curves of $M(\cdot)$ can be chosen to be real analytic, the function $\eta_{r+1}(\cdot)$ is \emph{semi-differentiable} at $\alpha_0$ and the left and right derivatives 
\begin{align*}
 \frac{\mathrm{d}_-}{\mathrm{d} \alpha} \eta_{r+1}(\alpha_0) := \lim_{\alpha \to \alpha_0^-} \frac{\eta_{r+1}(\alpha) - \eta_{r+1}(\alpha_0)}{\alpha - \alpha_0},  \quad\frac{\mathrm{d}_+}{\mathrm{d}\alpha} \eta_{r+1}(\alpha_0) := \lim_{\alpha \to \alpha_0^+} \frac{\eta_{r+1}(\alpha) - \eta_{r+1}(\alpha_0)}{\alpha - \alpha_0}
\end{align*}
exist. We further have 
\begin{enumerate}[a)]
 \item $\frac{\mathrm{d}_+}{\mathrm{d}\alpha}\eta_{r+1}(\alpha_0)=\frac{\mathrm{d}_-}{\mathrm{d}\alpha}\eta_{r}(\alpha_0)$, since ${\eta_r(\cdot)|}_{(\alpha_0-\varepsilon, \alpha_0)}$ and ${\eta_{r+1}(\cdot)|}_{[\alpha_0,\alpha_0+\varepsilon)}$ form a smooth singular value curve for some $\varepsilon > 0$; 
 \item $\frac{\mathrm{d}_-}{\mathrm{d}\alpha}\eta_{r}(\alpha_0) <\frac{\mathrm{d}_-}{\mathrm{d}\alpha}\eta_{r+1}(\alpha_0)$, since $\eta_{r+1}(\cdot)$ is the smaller singular value;
 \item the function $c_u(\alpha)-\eta_{r+1}(\alpha)=\sum_{i=r+2}^n \eta_i(\alpha)$ is smooth in a neighborhood of $\alpha_0$ (using $\eta_{r+1}(\alpha_0) > \eta_{r+2}(\alpha_0)$ and the constant--rank assumption).
\end{enumerate}
Together we have
\begin{equation*}
 \frac{\mathrm{d}_+}{\mathrm{d}\alpha}c_u(\alpha_0)-\frac{\mathrm{d}_-}{\mathrm{d}\alpha}c_u(\alpha_0)
 \stackrel{\text{c)}}{=}\frac{\mathrm{d}_+}{\mathrm{d}\alpha}\eta_{r+1}(\alpha_0)-\frac{\mathrm{d}_-}{\mathrm{d}\alpha}\eta_{r+1}(\alpha_0)
 \stackrel{\text{a)}}{=}\frac{\mathrm{d}_-}{\mathrm{d}\alpha}\eta_{r}(\alpha_0)-\frac{\mathrm{d}_-}{\mathrm{d}\alpha}\eta_{r+1}(\alpha_0)
 \stackrel{\text{b)}} < 0,
\end{equation*}
i.\,e., the right derivative of $c_u(\cdot)$ jumps downwards at $\alpha_0$. However, for $\alpha_0$ to be a local minimum, the right derivative would have to jump from a negative to a positive value.
Thus, $\alpha_0$ is not a local minimum.
\end{proof}


Figure~\ref{fig:alphaplot} shows a typical plot of $c_{u}(\cdot)$.
\begin{figure}
 \centering
 \includegraphics{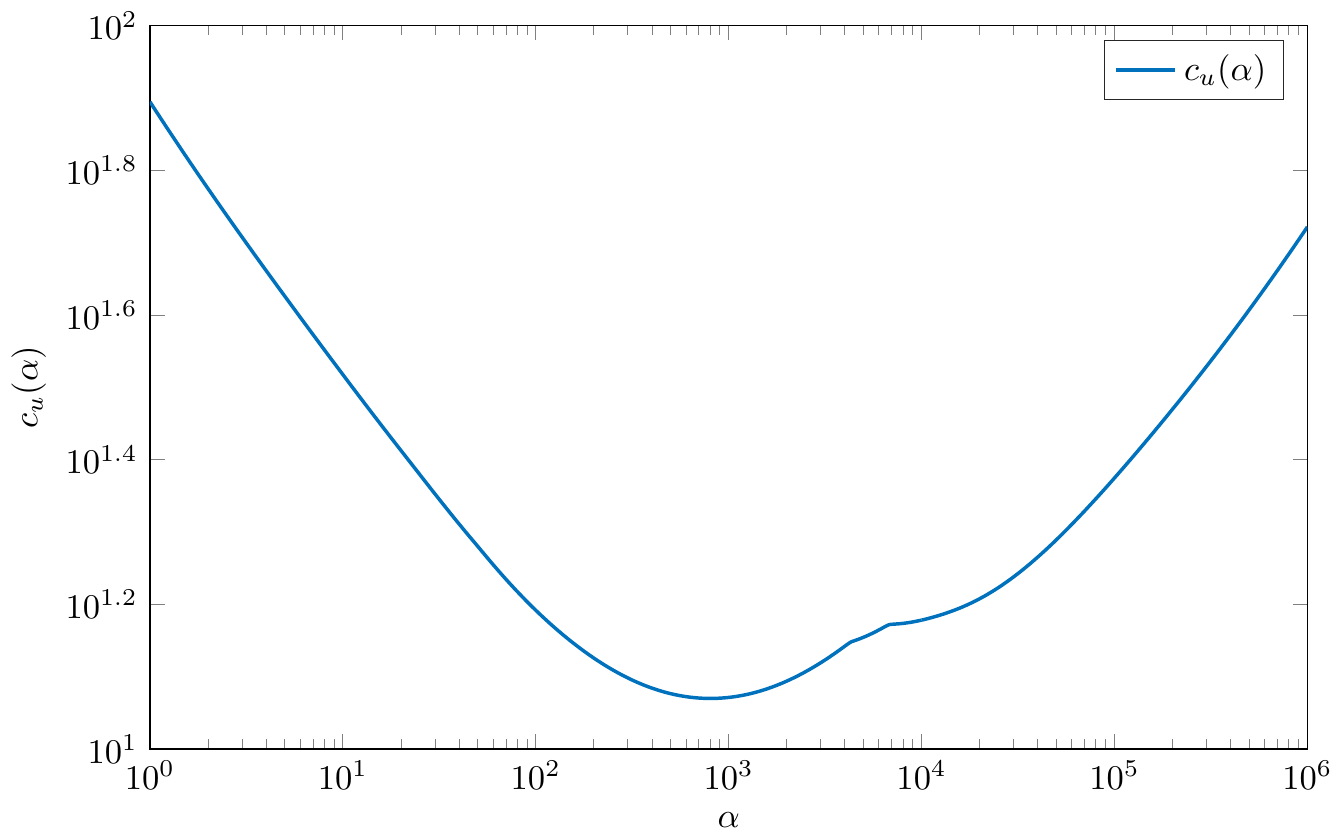}
 \caption{Typical behavior of $c_u(\cdot)$. Note that there are two non-differentiable points near $10^4$.}
 \label{fig:alphaplot}
\end{figure}
We observe three segments: for small values of $\alpha$, $c_{u}(\cdot)$ is monotonically decreasing, thereafter follows a region, in which $c_u(\cdot)$ is non-monotonic and contains local minima and after that, for large values of $\alpha$, $c_u(\cdot)$ is monotonically increasing.
This can be explained by looking at~\eqref{eq:M}:
for small (or large) $\alpha$, the term $\tfrac{1}{\beta\sqrt{2\alpha}}L^\T A\hat{R}$ (or $\tfrac{\sqrt{\alpha}}{\beta\sqrt{2}}L^\T A\hat{R}$) is dominating $M(\alpha)$ and its singular values -- in contrast to the case of a medium-sized $\alpha$ for which the three terms in $M(\alpha)$ are of comparable size. 

Aiming for a local minimum of $c_u(\cdot)$ at least, we proceed in the following steps:

\emph{Step 1: We use a sampling procedure} to obtain a good value for $\alpha$. One possibility is to sample $c_u(\alpha)$ for a large range of magnitudes in order to find at least approximately the region in which local minima are present. One possibility is to choose $\alpha \in \{10^j \; | \; j \in j_{\min}, j_{\min}+1,\, \ldots,\,j_{\max}\}$ with integers $j_{\min} \le j_{\max}$.


\emph{Step 2: We perform an optional local optimization} to improve the best sample value $c_u(\alpha)$ from Step 1.
In our approach we use standard gradient-based optimization methods implemented in the \textsc{Matlab} function \texttt{fmincon}. To address the possible non-smooth nature of the problem, one could also use more sophisticated non-smooth optimization methods such as \texttt{GRANSO} \cite{CurMO17}.
It remains to determine the derivative of $c_{u}(\cdot)$.
As discussed above, these do in general not exist for all $\alpha$.
Nevertheless, the derivatives exist almost everywhere.
We need the following result for the derivative of the singular values.
\begin{lem}[\cite{Lan64}] \label{lem:diffsingval}
Consider the differentiable function $Z : (-\varepsilon,\varepsilon) \to \R^{n \times m}$. Let $\sigma(t)$ be a singular value of $Z(t)$ converging to a simple singular value $\sigma_0$ of $Z_0 := Z(0)$. Let $u_0 \in \R^n$ and $v_0 \in \R^m$ be the corresponding right and left singular vectors, i.\,e., $\left\| u_0 \right\|_2 = \left\| v_0 \right\|_2 = 1$, $Z_0 u_0 = \sigma_0 v_0$ and $v_0^\T Z_0 = \sigma_0 u_0^\T$. Then
\begin{equation*}
 \left.\frac{\mathrm{d}}{\mathrm{d}t}\sigma(t)\right|_{t=0} = v_0^\T \left(\left.\frac{\mathrm{d}}{\mathrm{d}t}Z(t)\right|_{t=0}\right) u_0. 
\end{equation*}
\end{lem}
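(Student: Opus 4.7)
The plan is to reduce the statement to the classical perturbation formula for a simple eigenvalue of a symmetric matrix, applied to $Z(t)^\T Z(t)$. Since $\sigma_0$ is a simple singular value of $Z_0$ and $\sigma_0 > 0$ (being one of the singular values in the statement, and implicitly nonzero so that left and right singular vectors are well-defined via $Z_0 u_0 = \sigma_0 v_0$), the square $\sigma_0^2$ is a simple eigenvalue of the symmetric matrix $Z_0^\T Z_0$, with eigenvector $u_0$. By classical analytic perturbation theory (Rellich), in a neighborhood of $t = 0$ the curve $\sigma(t)^2$ varies smoothly as an eigenvalue of the smoothly varying symmetric matrix $Z(t)^\T Z(t)$.

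For simple eigenvalues of symmetric matrices, the standard Hellmann--Feynman formula gives
\begin{equation*}
 \left.\frac{\mathrm{d}}{\mathrm{d}t}\sigma(t)^2\right|_{t=0} = u_0^\T \left(\left.\frac{\mathrm{d}}{\mathrm{d}t}\bigl(Z(t)^\T Z(t)\bigr)\right|_{t=0}\right) u_0 = u_0^\T \bigl(Z_0'^\T Z_0 + Z_0^\T Z_0'\bigr) u_0,
\end{equation*}
where I abbreviate $Z_0' := \left.\frac{\mathrm{d}}{\mathrm{d}t}Z(t)\right|_{t=0}$. Using $Z_0 u_0 = \sigma_0 v_0$ and its transpose $u_0^\T Z_0^\T = \sigma_0 v_0^\T$ on the right-hand side collapses both terms to $\sigma_0 v_0^\T Z_0' u_0$, yielding the total $2 \sigma_0 v_0^\T Z_0' u_0$. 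On the left-hand side, the chain rule gives $2\sigma_0 \sigma'(0)$; dividing by $2\sigma_0 \neq 0$ produces the claim.

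The only real subtlety is the selection of smooth branches: one needs to know that the particular singular value labeled $\sigma(t)$ can be tracked differentiably through $t = 0$, and likewise that $u_0, v_0$ are unambiguous up to the sign that preserves $Z_0 u_0 = \sigma_0 v_0$. Both follow from the simplicity of $\sigma_0^2$ as an eigenvalue of $Z_0^\T Z_0$, which guarantees that a smooth (indeed analytic) local branch exists and that $u_0$ is unique up to sign; the corresponding $v_0$ is then fixed by the singular pair relation. An alternative, equally short route would be to differentiate $Z(t) u(t) = \sigma(t) v(t)$ directly, multiply on the left by $v_0^\T$, and use $v_0^\T v_0' = 0$ together with $v_0^\T Z_0 = \sigma_0 u_0^\T$ and $u_0^\T u_0' = 0$; this avoids the detour through $Z^\T Z$ but requires the same smooth-branch argument.
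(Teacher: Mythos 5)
The paper does not prove this lemma at all; it is quoted verbatim from the cited reference [Lan64], so there is no in-paper argument to compare against. Your proof is correct and self-contained: passing to the symmetric matrix $Z(t)^\T Z(t)$, whose eigenvalue $\sigma(t)^2$ is simple at $t=0$, applying the Hellmann--Feynman formula, and collapsing $u_0^\T\bigl(Z_0'^\T Z_0 + Z_0^\T Z_0'\bigr)u_0$ to $2\sigma_0 v_0^\T Z_0' u_0$ via the singular-pair relations is exactly the standard derivation, and your alternative route through differentiating $Z(t)u(t)=\sigma(t)v(t)$ is equally valid. Two small remarks. First, you are right that $\sigma_0>0$ is tacitly required: at $\sigma_0=0$ the pairing $Z_0u_0=\sigma_0v_0$ does not determine $v_0$, the division by $2\sigma_0$ fails, and singular values need not be differentiable there (think of $\lvert t\rvert$); it is good that you flagged this rather than glossed over it. Second, your parenthetical claim that the local branch is ``indeed analytic'' overreaches, since the hypothesis on $Z$ is only differentiability, not analyticity; what you actually need --- and what simplicity of $\sigma_0^2$ as an eigenvalue of $Z_0^\T Z_0$ does deliver under mere differentiability of $Z$ --- is that the unique eigenvalue branch near $\sigma_0^2$ is differentiable at $t=0$, which is all the Hellmann--Feynman step uses. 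With that wording adjusted, the argument is complete.
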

We have
\begin{align*}
 \frac{\mathrm{d}}{\mathrm{d} \alpha} M(\alpha) &= \begin{bmatrix} 0 & -\frac{1}{2\beta\alpha\sqrt{2\alpha}} L^\T A \hR + \frac{1}{2\beta\sqrt{2\alpha}} L^\T \hR  \end{bmatrix}
\end{align*}
and with Lemma~\ref{lem:diffsingval} we finally obtain
\begin{align*}
 \frac{\mathrm{d}}{\mathrm{d} \alpha} c_{u}(\alpha) =
 2\sum_{j=r+1}^n\frac{\mathrm{d}}{\mathrm{d} \alpha} \eta_j(\alpha) 
 &= 2\sum_{j=r+1}^n v_j(\alpha)^\T \frac{\mathrm{d}}{\mathrm{d} \alpha} M(\alpha) u_j(\alpha), 
\end{align*}
where $u_j(\alpha)$ and $v_j(\alpha)$ are the right and left singular vectors associated with the singular value $\eta_j(\alpha)$ of $M(\alpha)$ which we assume to be simple in order to ensure differentiability.

The second bound~\eqref{eq:bound_sep} 
\begin{equation*}
 \left\| y - y_r\right\|_{\LL_2} \le 2 \left( \sigma_{r+1} + \ldots + \sigma_n \right)\left\|u \right\|_{\LL_2} + 2 \left( \theta_{r+1}(\alpha) + \ldots + \theta_n(\alpha) \right)\left\| z_0 \right\|_{2}
\end{equation*}
can be treated in a similar manner as the bound above. Note that the first summand is parameter-independent and the second one only depends on $\alpha$ in the singular values of
\begin{equation*}
  N(\alpha) = \frac{1}{\sqrt{2\alpha}}L^\T A \hR + \frac{\sqrt{\alpha}}{\sqrt{2}} L^\T \hR.
 \end{equation*}
The derivatives can be obtained as for $M(\alpha)$ by setting $\beta = 1$ and therefore, we omit the details here.

\section{Numerical Evaluation}\label{sec:eval}
Now we evaluate our method and especially the error bounds and compare them with the methods listed in Section~\ref{sec:sota}. Here we consider two small examples from the SLICOT benchmark collection for model reduction\footnote{See \url{http://slicot.org/20-site/126-benchmark-examples-for-model-reduction}.}, see~\cite{ChaV02}. In principle, our method also works well in the large-scale setting since methods for solving large-scale Lyapunov equations are available and the optimization procedure from Subsection~\ref{sec:alpha} only acts on matrices which are constructed by low-rank Cholesky factors which are usually small. However, we choose to consider small examples since they allow us to evaluate the error of the reduction by simulating the error systems. The examples are constructed such that the input is zero initially so that we we can assess the quality of the reduction for the part of the ROM depending on the initial values. Later, the input is turned on and we can evaluate the reduction of the response to the input.
\begin{example} \label{exm:beam} First we consider the \texttt{beam} example with $n=348$ and $m=p=1$. We choose $X_0 = \big[X_0^{(i,j)}\big] \in \R^{n \times q}$ with $q=2$, $X_0^{(5,1)} = 1$, $X_0^{(101,2)} = 100$, and zeros elsewhere. As input we choose $$u(t) = \begin{cases} 1, & \text{if } t \in [500,1000], \\ 0, &  \text{otherwise}\end{cases}$$ with ${\|u\|}_{\LL_2} = 500$ and as initial condition we take $x_0 = X_0z_0$ with $z_0 = \left[\begin{smallmatrix} 10 \\ -1 \end{smallmatrix}\right]$ and ${\|z_0\|}_2 \approx 10.0499$.
\end{example}

\begin{example} \label{exm:CDplayer} The second example we consider in this paper is the \texttt{CDplayer} with $n=120$ and $m=p=2$. The $X_0 \in \R^{n \times q}$ with $q=2$ is constructed such that $W_r^\T X_0 = 0$, where $W_r$ is the left projection matrix obtained from standard \textbf{BT} with $r=50$. In this way we aim to construct an example where \textbf{BT} leads to a poor reduction in the part of the response that depends on the initial value. The input is chosen as $$u(t) = \begin{bmatrix} 0 \\ 1 \end{bmatrix} \cdot \begin{cases} 1, & \text{if } t \in [1.5,3], \\ 0, &  \text{otherwise}\end{cases}$$ with ${\|u \|}_{\LL_2} = 1.5$. The initial condition is $x_0 = X_0z_0$ with $z_0 = \left[\begin{smallmatrix} 1 \\ 10 \end{smallmatrix}\right]$ and ${\|z_0\|}_2 \approx 10.0499$.
\end{example}
The following numerical experiments have been run under \textsc{Matlab} R2021b Update 1 on a HP X360 Convertible laptop with an Intel\textsuperscript{\textregistered} Core\textsuperscript{\texttrademark} i7-10710U CPU @ 1.10 GHz with 16 GB of RAM and using Windows 10.

\subsection{Evaluation of the Error Bounds}
First we consider the error bound constants $c_u$ and $c_{x_0}$ as in~\eqref{eq:bound_with_cu} of \textbf{jShiftBT} for several fixed values of $\beta$ and compare them with the ones obtained by \textbf{AugBT} and \textbf{BT}.
We have used the optimized values $\alpha_*$ for the error bounds.
The results are listed in Table~\ref{tab:joint}.
As discussed in Subsection~\ref{sec:disc}, the table illustrates that for \textbf{jShiftBT}, $c_u$ is monotonically decreasing and that it tends to the value of $c_u$ for \textbf{BT} for $\beta \to \infty$. Moreover, $c_{x_0}$ for \textbf{jShiftBT} is monotonically increasing. 

Let us briefly discuss the performance of the heuristic choices $\alpha_{\rm heur}$ and $\tilde{\alpha}_{\rm heur}$. A comparison is given in Table~\ref{tab:alpha} in which we list the values of the different $\alpha$ values and the corresponding value of $c_u$ for \textbf{jShiftBT} with $\beta = 1$ and $r=30$. Recall that in this case, $c_u = c_{x_0}$, so we only list one of the values. Apparantly, the choice of $\tilde{\alpha}_{\rm heur}$ is overestimating the best error bound obtained by $\alpha_*$ by up to two orders of magnitude. On the other hand, the choice $\alpha_{\rm heur}$ is closer to the optimal value and there is only a small overestimation of the error bound.  


Next, we evaluate the error bound~\eqref{eq:bound_sep} of the separate projection ROMs and compare it with the a posteriori bound~\eqref{eq:bound_BeaGM17} for \textbf{BT-BT}, respectively. For both methods, we list the values of $c_u$ and $c_{x_0}$ in Table~\ref{tab:allmethods} for various partial reduced order $k$ and $\ell$. Since $c_u$ is the same for both methods we only list it once. First, we see that for all reduced orders, the bounds for \textbf{BT-BT} are typically better than the ones for \textbf{sShiftBT}. However, we would like to stress that the a posteriori bound of \textbf{BT-BT} requires a fully balanced realization of the system $[A,X_0,C]$ which cannot be computed for large systems and moreover the solution of a Sylvester equation for every value of $\ell$.  On the other hand, our method works on the low-rank Gramian factors and these can still be optimized even in the large-scale setting. So our bound can still be at least estimated in practical applications.


\begin{table}
 \centering
 \caption{Comparison of the error bound constants $c_u/c_{x_0}$ for \textbf{AugBT}, \textbf{BT}, and \textbf{jShiftBT} for several fixed values of $\beta$. The parameter $\alpha$ has been optimized as in Subsection~\ref{sec:alpha}.}
 \label{tab:joint}
\subtable[Results for Example~\ref{exm:beam}.]{\begin{tabular}{c|c|ccccc|c}
    $r$  & \textbf{AugBT} & \multicolumn{5}{c|}{\textbf{jShiftBT}} & \textbf{BT} \\
       & & $\beta=0.01$ & $\beta=0.1$ & $\beta = 1$ & $\beta = 10$ & $\beta = 100$ & \\
  \hline 
  \multirow{2}{*} 5    & 2.4\epl 2/& 2.9\epl 4/& 3.4\epl 3/& 4.1\epl 2/& 1.7\epl 2/& 1.7\epl 2/& 1.7\epl 2/ \\
                       & 2.5\epl 2 & 2.9\epl 2 & 3.4\epl 2 & 4.1\epl 2 & 1.7\epl 3 & 1.7\epl 4 & 1.0\epl 1 \\
  \multirow{2}{*} {10} & 5.1\epl 1/& 1.2\epl 4/& 1.2\epl 3/& 1.3\epl 2/& 2.9\epl 1/& 2.4\epl 1/& 2.4\epl 1/ \\
                       & 9.4\epl 1 & 1.2\epl 2 & 1.2\epl 2 & 1.3\epl 2 & 2.9\epl 2 & 2.4\epl 3 & 8.4\epl 0 \\
  \multirow{2}{*} {15} & 2.1\epl 1/& 5.0\epl 3/& 5.0\epl 2/& 5.3\epl 1/& 1.1\epl 1/& 7.7\epl 0/& 7.5\epl 0/ \\
                       & 5.8\epl 1 & 5.0\epl 1 & 5.0\epl 1 & 5.3\epl 1 & 1.1\epl 2 & 7.7\epl 2 & 3.8\epl 0 \\
  \multirow{2}{*} {20} & 1.2\epl 1/& 2.8\epl 3/& 2.8\epl 2/& 3.1\epl 1/& 6.4\epl 0/& 3.8\epl 0/& 3.7\epl 0/ \\
                       & 4.1\epl 1 & 2.8\epl 1 & 2.8\epl 1 & 3.1\epl 1 & 6.4\epl 1 & 3.8\epl 2 & 2.9\epl 0 \\
  \multirow{2}{*} {25} & 6.7\epl 0/& 1.4\epl 3/& 1.4\epl 2/& 1.7\epl 1/& 3.7\epl 0/& 1.9\epl 0/& 1.8\epl 0/ \\
                       & 2.8\epl 1 & 1.4\epl 1 & 1.4\epl 1 & 1.7\epl 1 & 3.7\epl 1 & 1.9\epl 2 & 3.0\epl 0 \\
  \multirow{2}{*} {30} & 3.5\epl 0/& 5.8\epl 2/& 5.8\epl 1/& 7.4\epl 0/& 2.0\epl 0/& 9.3\emi 1/& 8.6\emi 1/ \\
                       & 1.9\epl 1 & 5.8\epl 0 & 5.8\epl 0 & 7.4\epl 0 & 2.0\epl 1 & 9.3\epl 1 & 1.3\epl 0 \\
  \multirow{2}{*} {40} & 1.2\epl 0/& 1.9\epl 2/& 1.9\epl 1/& 2.6\epl 0/& 7.4\emi 1/& 2.5\emi 1/& 2.0\emi 1/ \\
                       & 9.3\epl 0 & 1.9\epl 0 & 1.9\epl 0 & 2.6\epl 0 & 7.4\epl 0 & 2.5\epl 1 & 6.8\emi 1 \\
  \multirow{2}{*} {50} & 4.2\emi 1/& 4.7\epl 1/& 4.9\epl 0/& 8.4\emi 1/& 2.2\emi 1/& 6.2\emi 2/& 3.3\emi 2/ \\
                       & 4.5\epl 0 & 4.7\emi 1 & 4.9\emi 1 & 8.4\emi 1 & 2.2\epl 0 & 6.2\epl 0 & 3.8\emi 1 \\
  \end{tabular}}
  \subtable[Results for Example~\ref{exm:CDplayer}.]{\begin{tabular}{c|c|ccccc|c}
    $r$  & \textbf{AugBT} & \multicolumn{5}{c|}{\textbf{jShiftBT}} & \textbf{BT} \\
       & & $\beta=0.01$ & $\beta=0.1$ & $\beta = 1$ & $\beta = 10$ & $\beta = 100$ & \\
  \hline 
  \multirow{2}{*} 5    & 1.3\epl 3/& 1.0\epl 5/& 1.5\epl 4/& 2.8\epl 3/& 1.5\epl 3/& 1.3\epl 3/& 1.3\epl 3/ \\
                       & 1.8\epl 4 & 1.0\epl 3 & 1.5\epl 3 & 2.8\epl 3 & 1.5\epl 4 & 1.3\epl 5 & 3.3\epl 1 \\
  \multirow{2}{*} {10} & 7.0\epl 1/& 5.2\epl 4/& 7.1\epl 3/& 1.1\epl 3/& 2.1\epl 2/& 7.6\epl 1/& 6.3\epl 1/ \\
                       & 2.5\epl 3 & 5.2\epl 2 & 7.1\epl 2 & 1.1\epl 3 & 2.1\epl 3 & 7.6\epl 3 & 3.3\epl 1  \\
  \multirow{2}{*} {15} & 2.0\epl 1/& 3.7\epl 4/& 4.5\epl 3/& 5.4\epl 2/& 1.1\epl 2/& 2.5\epl 1/& 1.2\epl 1/ \\
                       & 1.1\epl 3 & 3.7\epl 2 & 4.5\epl 2 & 5.4\epl 2 & 1.1\epl 3 & 2.5\epl 3 & 3.3\epl 1  \\
  \multirow{2}{*} {20} & 1.2\epl 1/& 2.7\epl 4/& 3.2\epl 3/& 3.8\epl 2/& 5.7\epl 1/& 1.5\epl 1/& 4.7\epl 0/ \\
                       & 7.6\epl 2 & 2.7\epl 2 & 3.2\epl 2 & 3.8\epl 2 & 5.7\epl 2 & 1.5\epl 3 & 3.3\epl 1  \\
  \multirow{2}{*} {25} & 7.1\epl 0/& 2.1\epl 4/& 2.3\epl 3/& 2.8\epl 2/& 4.0\epl 1/& 8.6\epl 0/& 1.6\epl 0/ \\
                       & 6.7\epl 2 & 2.1\epl 2 & 2.3\epl 2 & 2.8\epl 2 & 4.0\epl 2 & 8.6\epl 2 & 3.3\epl 1  \\
  \multirow{2}{*} {30} & 4.2\epl 0/& 1.7\epl 4/& 1.8\epl 3/& 2.1\epl 2/& 3.0\epl 1/& 4.5\emi 0/& 8.1\emi 1/ \\
                       & 4.8\epl 2 & 1.7\epl 2 & 1.8\epl 2 & 2.1\epl 2 & 3.0\epl 2 & 4.5\epl 2 & 3.3\epl 1  \\
  \multirow{2}{*} {40} & 2.1\epl 0/& 1.1\epl 4/& 1.1\epl 3/& 1.2\epl 2/& 1.6\epl 1/& 2.3\emi 0/& 2.9\emi 1/ \\
                       & 3.0\epl 2 & 1.1\epl 2 & 1.1\epl 2 & 1.2\epl 2 & 1.6\epl 2 & 2.3\epl 2 & 3.3\epl 1  \\
  \multirow{2}{*} {50} & 9.1\emi 1/& 7.2\epl 3/& 7.4\epl 2/& 7.8\epl 1/& 8.6\epl 0/& 1.1\emi 0/& 1.1\emi 1/ \\
                       & 1.7\epl 2 & 7.2\epl 1 & 7.4\epl 1 & 7.8\epl 1 & 8.6\epl 1 & 1.1\epl 2 & 3.3\epl 1  \\
  \end{tabular}}
\end{table}

\begin{table}[t]
 \centering
  \caption{Comparison of the error bound constants of \textbf{jShiftBT} for the heuristic values $\alpha_{\rm heur}$, $\tilde{\alpha}_{\rm heur}$, and the locally optimal choice $\alpha_*$. Here we use $\beta = 1$ (thus, $c_u = c_{x_0}$) and $r = 30$.}
  \label{tab:alpha}
 \subtable[Results for Example~\ref{exm:beam}]{
\begin{tabular}{c|ccc}
        & $\alpha_{*}$ & $\alpha_{\rm heur}$ & $\tilde{\alpha}_{\rm heur}$ \\
        \hline 
  $\alpha$ value & 1.1\epl 1 & 1.4\epl 2 & 5.1\emi 3 \\
  $c_u$ & 7.4\epl 0 & 1.5\epl 1 & 1.8\epl 2 \\
  \end{tabular}} 
  \subtable[Results for Example~\ref{exm:CDplayer}]{
\begin{tabular}{c|ccc}
        & $\alpha_{*}$ & $\alpha_{\rm heur}$ & $\tilde{\alpha}_{\rm heur}$ \\
        \hline
  $\alpha$  value & 5.6\epl 3 & 3.8\epl 4 & 2.4\emi 2 \\
  $c_u$ & 2.1\epl 2 & 3.0\epl 2 & 2.9\epl 4 \\
  \end{tabular}} 
\end{table}

\begin{table}[t]
 \centering
 \caption{Error bound constants of \textbf{BT-BT} and \textbf{sShiftBT} for the partial reduced orders $k$ and $\ell$. Here we use optimized values of $\alpha$. The minimum constants $c_{x_0}$ are emphasized by bold font.}
 \label{tab:allmethods}
 \subtable[Results for Example~\ref{exm:beam}]{
\begin{tabular}{c|c|cc}
    $k,\,\ell$ & \multicolumn{1}{c}{$c_u$} & \multicolumn{2}{c}{$c_{x_0}$} \\
     &  & \textbf{BT-BT} & \textbf{sShiftBT} \\
  \hline 
   5 & 1.7e$+$2 & {\bf 1.0e$+$1} & 2.9e$+$2 \\
  10 & 2.4e$+$1 & {\bf 7.0e$+$0} & 1.2e$+$2 \\
  15 & 7.5e$+$0 & {\bf 2.8e$+$0} & 5.0e$+$1 \\
  20 & 3.7e$+$0 & {\bf 1.7e$+$0} & 2.8e$+$1 \\
  25 & 1.8e$+$0 & {\bf 1.6e$+$0} & 1.4e$+$1 \\
  30 & 8.6e$-$1 & {\bf 2.9e$-$1} & 5.8e$+$0 \\
  40 & 2.0e$-$1 & {\bf 1.0e$-$1} & 1.9e$+$0 \\
  50 & 3.3e$-$2 & {\bf 5.9e$-$2} & 4.7e$-$1
  \end{tabular}} 
  \hspace*{0.2cm}\subtable[Results for Example~\ref{exm:CDplayer}]{\begin{tabular}{c|c|cc}
    $k,\,\ell$  & \multicolumn{1}{c}{$c_u$} & \multicolumn{2}{c}{$c_{x_0}$} \\
     &  & \textbf{BT-BT} & \textbf{sShiftBT} \\
  \hline 
   5 & 1.3e$+$3 & {\bf 9.0e$+$0} & 6.8e$+$2 \\
  10 & 6.3e$+$1 & {\bf 5.7e$+$0} & 4.1e$+$2 \\
  15 & 1.2e$+$1 & {\bf 6.3e$+$0} & 2.9e$+$2 \\
  20 & 4.7e$+$1 & {\bf 4.1e$+$0} & 2.3e$+$2 \\
  25 & 1.6e$+$0 & {\bf 3.4e$+$0} & 1.9e$+$2 \\
  30 & 8.1e$-$1 & {\bf 3.4e$+$0} & 1.6e$+$2 \\
  40 & 2.9e$-$1 & {\bf 1.8e$+$0} & 1.1e$+$2 \\
  50 & 1.1e$-$1 & {\bf 1.5e$+$0} & 6.8e$+$1
  \end{tabular}}
\end{table}

\subsection{Evaluation of the Errors}
In Figure~\ref{fig:allmethods} we plot the actual errors for each of the reduction methods. Here we choose $r=30$ for standard \textbf{BT}, \textbf{TrlBT}, \textbf{AugBT}, and \textbf{jShiftBT} as well as $k = \ell = 15$ for \textbf{BT-BT} and \textbf{sShiftBT}. These figures have been created by simulating the error systems using \textsc{Matlab}'s ODE solver \texttt{ode45} and plotting ${\|y(t) - y_r(t)\|}_2$ over the time $t$. Note that the results typically show an extremely oscillatory behavior, thus we have applied a smoothing filter to the output to improve the visibility of the individual error trajectories. The examples have been constructed such that we see the error of the response to the initial value in the first half of the considered time interval, while in the second half we see the error of the input-dependent part. The figure indicates that for Example~\ref{exm:beam}, the overall reduction using our \textbf{jShiftBT} outperforms all the other methods, especially when reducing the response to the initial value. The same is true for Example~\ref{exm:CDplayer}, even though it is not as clearly visible in the figure as for the first example (cf. Table~\ref{tab:errors} which contains the errors for our experiments). Standard \textbf{BT}, while not guaranteeing any error bound, also works similarly well for the two examples, as does \textbf{AugBT}.
However, the two separate projection methods \textbf{BT-BT} and \textbf{sShiftBT} perform much worse. But this behavior can be expected since the other methods use a single projection to reduce both the responses to the initial value and the input at once. In constrast to that, in the seperate projection methods, two individual projections which may have a large redundant ``overlap'', have to be formed. Finally, the method \textbf{TrlBT} does not seem to work well on both examples and produces by far the largest error.


\begin{figure}
 \centering
 \begin{tabular}{c}
 \subfigure[Results for Example~\ref{exm:beam}.]{
 \includegraphics{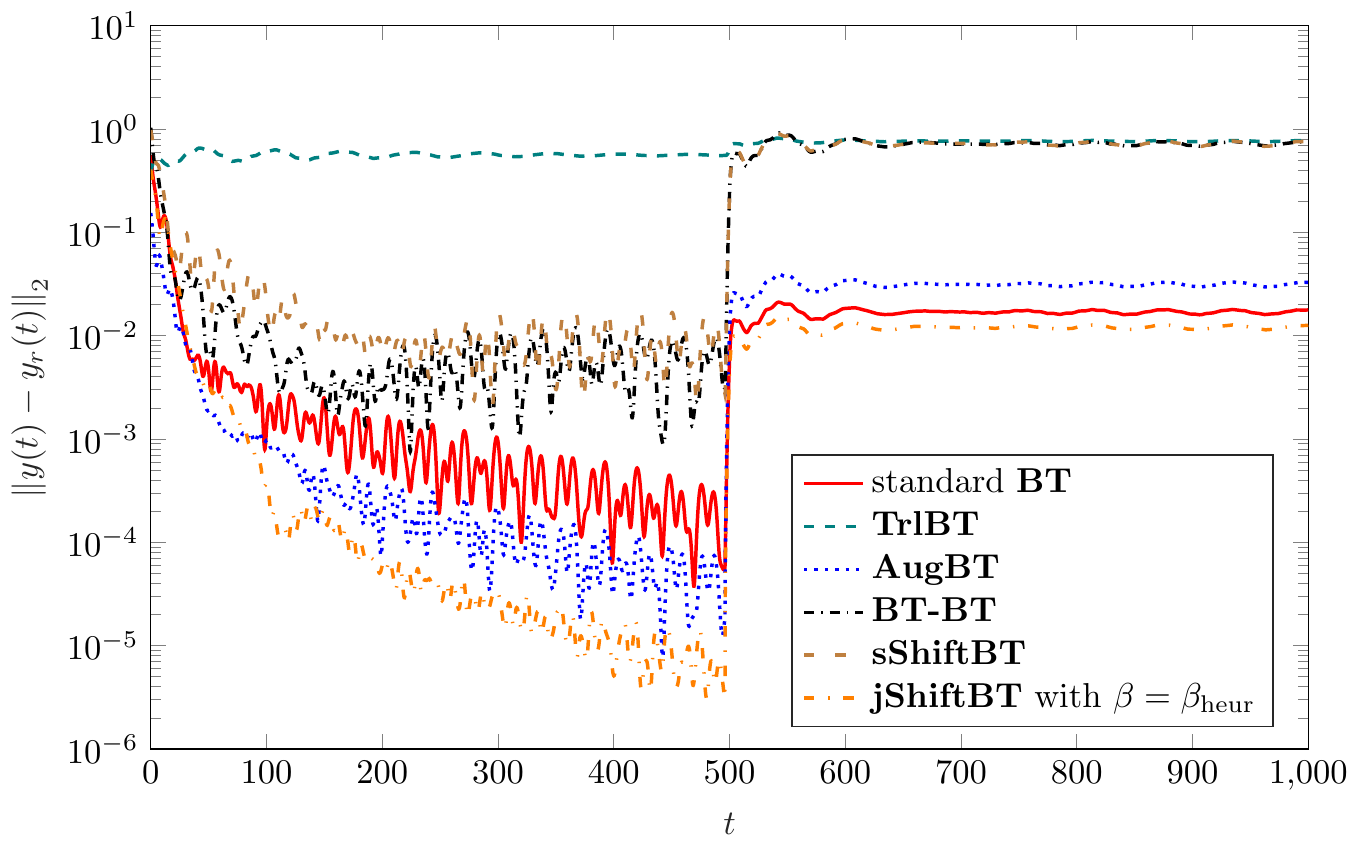}
 } \\
 \subfigure[Results for Example~\ref{exm:CDplayer}.]{
 \includegraphics{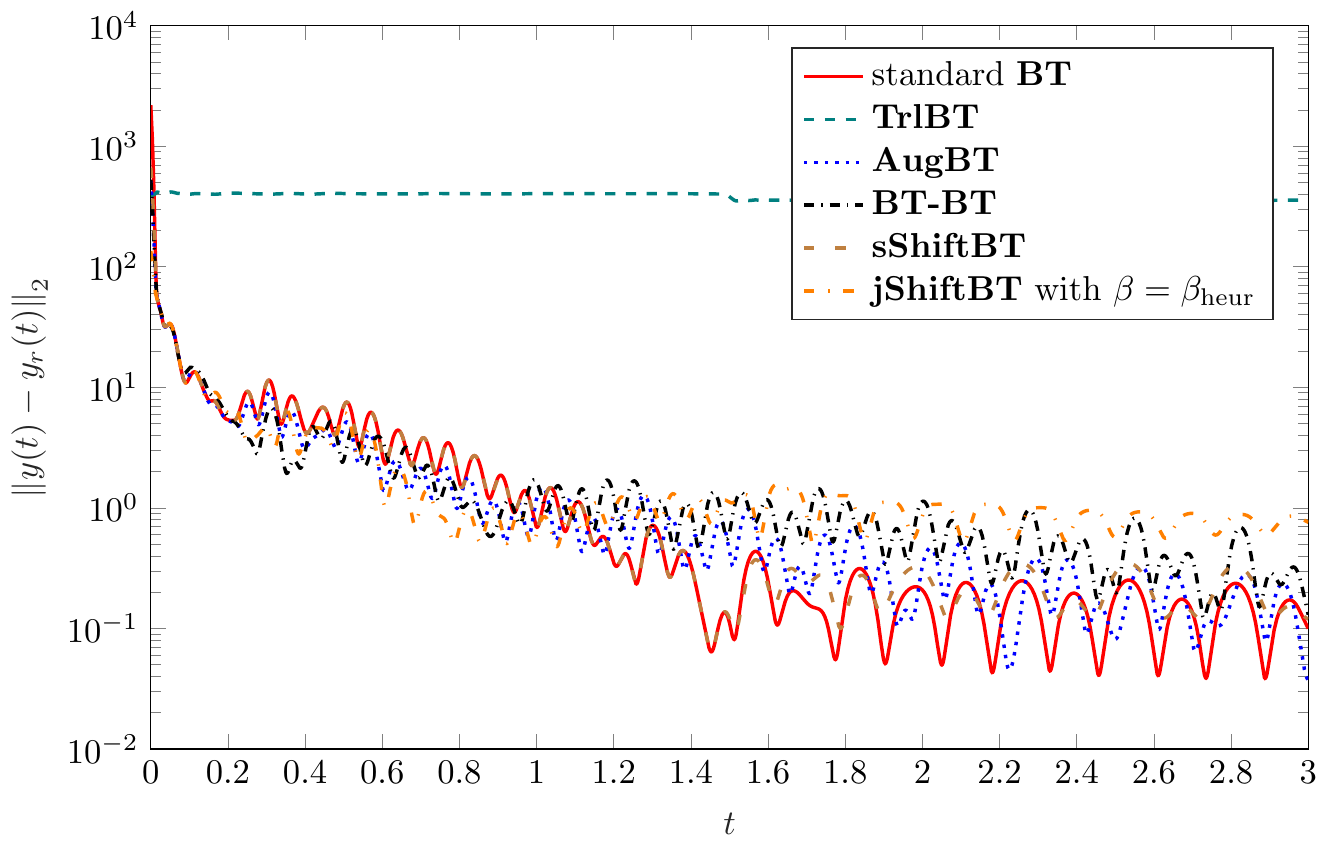}
 } 
 \end{tabular}
 \caption{Comparison of the errors of the new approach with the methods from the literature. Here we use $r=30$ for standard \textbf{BT}, \textbf{TrlBT}, \textbf{AugBT}, and \textbf{jShiftBT}; and $k = \ell = 15$ for \textbf{BT-BT} and \textbf{sShiftBT} as well as the optimized values of $\alpha$.}
 \label{fig:allmethods}
\end{figure} 


Next we show error plots for different choices of $\beta$ in \textbf{jShiftBT} for the reduced order $r=30$ (where we have again applied a smoothing filter). As expected, the figure shows that for smaller values of $\beta$, the reduction of the initial value response is emphazised, leading to smaller errors in the first half of the considered time interval, but to larger errors in the second half. In contrast to that, we focus on the reduction of the input response for larger $\beta$ and so the errors are bigger in the first, but smaller in the second half of the time interval of interest.


\begin{figure}
  \centering
  \begin{tabular}{c}
  \subfigure[Results for Example~\ref{exm:beam}.]{
 \includegraphics{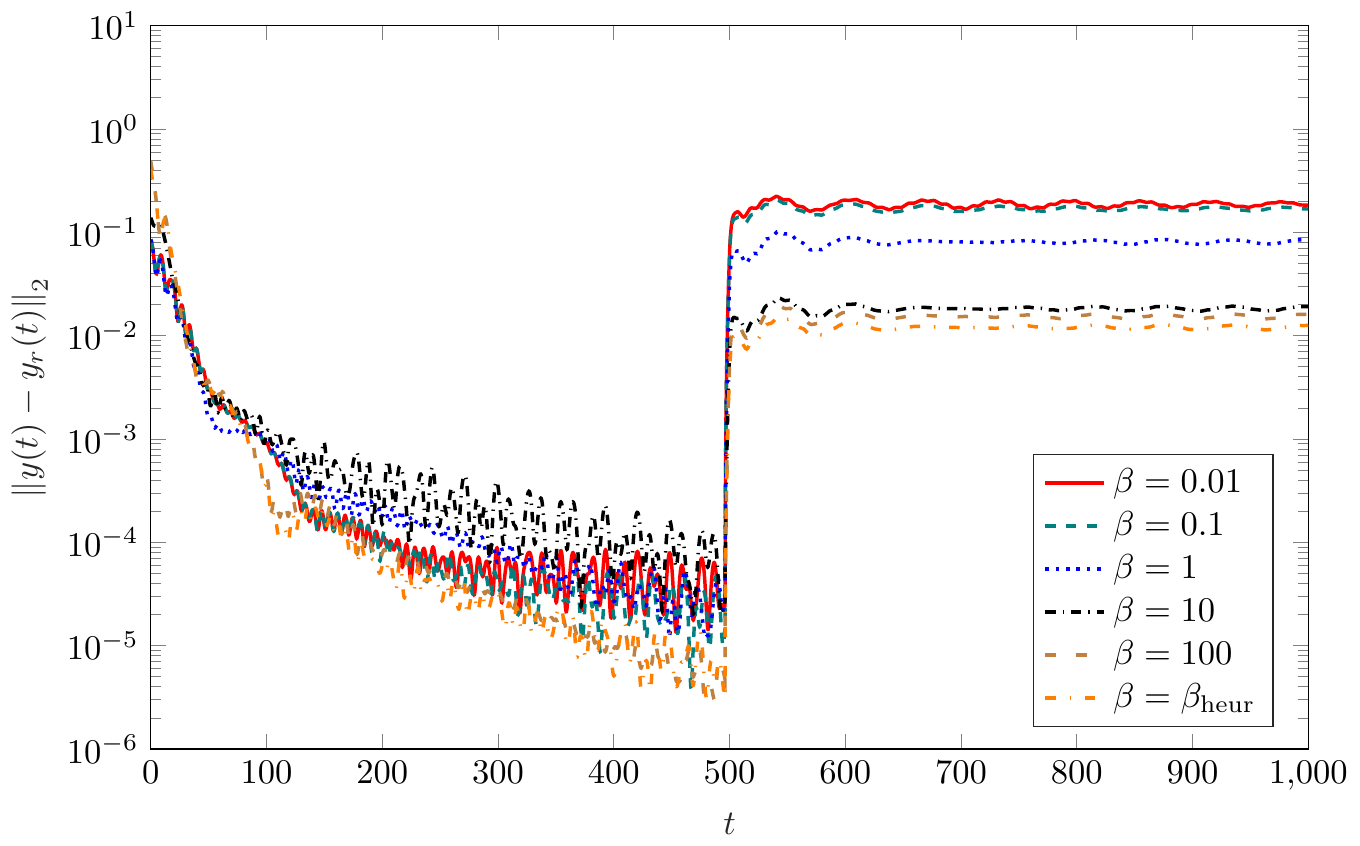}
  } \\
  \subfigure[Results for Example~\ref{exm:CDplayer}.]{
 \includegraphics{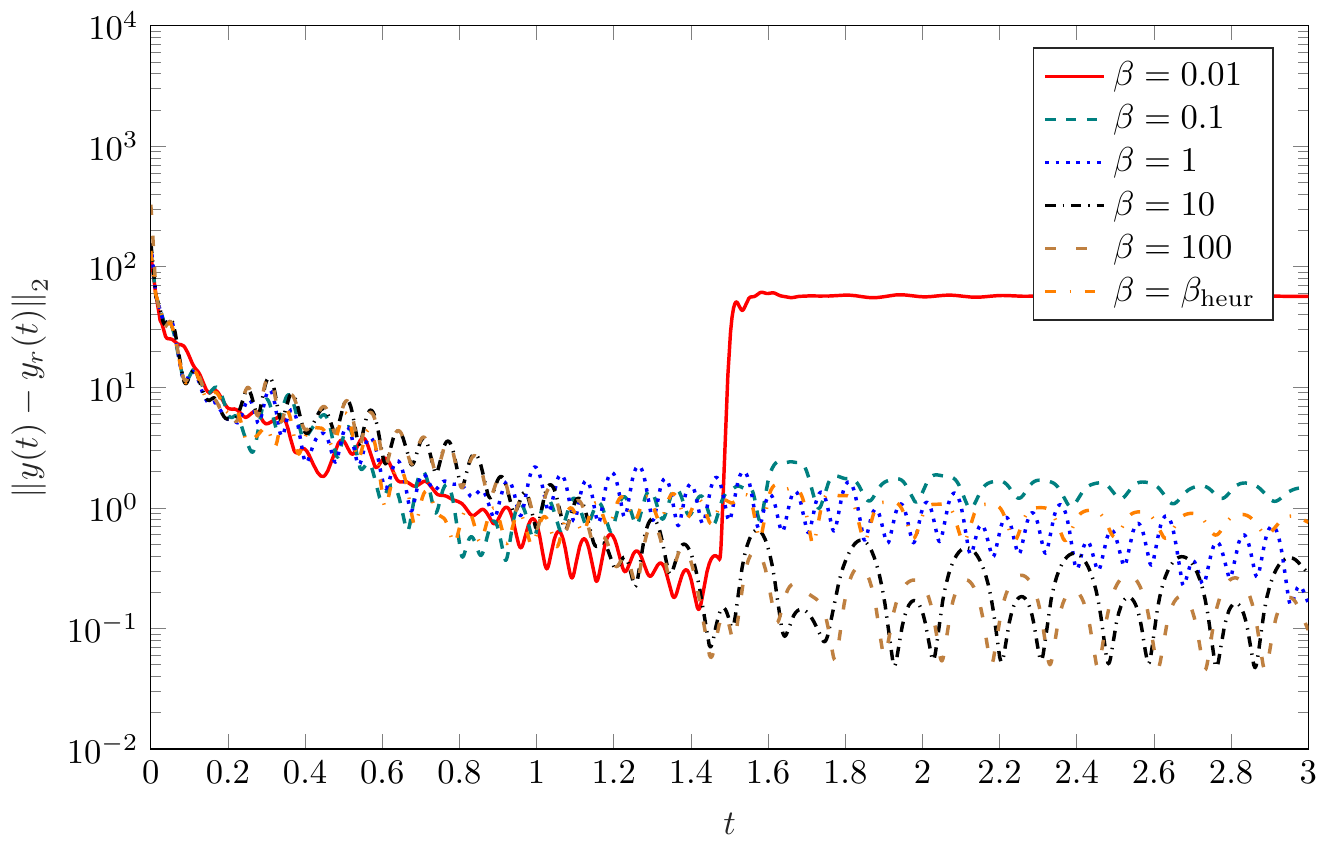}
  } 
  \end{tabular}
 \caption{Comparison of the errors of \textbf{jShiftBT} for various fixed values of $\beta$ and the optimal choice of $\alpha$. Here we use the reduced order $r=30$.}
  \label{fig:joint}
\end{figure} 

Finally, we evaluate the $\LL_2$- and $\LL_\infty$-errors obtained in all our numerical simulations and we list the corresponding error norms in Table~\ref{tab:errors}. This table shows that \textbf{jShiftBT} is the clear winner among all the methods and that the overall errors are also quite robust with respect to changes in the parameter $\beta$ for the considered examples. However, a few methods sometimes get close to the errors of \textbf{jShiftBT}, this is especially the case for \textbf{AugBT}. The table further indicates that the reduction obtained by \textbf{BT} for Example~\ref{exm:CDplayer} is relatively poor compared with the other methods. This is due to the fact that the response to the initial value is not reduced well (as purposely designed in this example).

\begin{table}[t]
 \centering
  \caption{Comparison of the error norms for the simulations in Figures~\ref{fig:allmethods} and~\ref{fig:joint}. Here we list the $\LL_2$- and the $\LL_\infty$-norms of the error trajectories in the respective time interval displayed in the figures. The smallest errors are emphasized by bold font.}
  \label{tab:errors}
 \subtable[Results for Example~\ref{exm:beam}]{
\begin{tabular}{c|cc}
        method & $\LL_2$-error & $\LL_\infty$-error \\
        \hline 
      standard \textbf{BT}  & 1.3\epl 0 & 2.2\epl 0\\
      \textbf{TrlBT}  & 2.1\epl 1 & 8.3\emi 1 \\
      \textbf{AugBT}  & 7.8\emi 1 & 5.3\emi 1 \\
      \textbf{BT-BT}  & 1.6\epl 1 & 6.8\epl 1 \\
      \textbf{sShiftBT}  & 1.6\epl 1 & 3.0\epl 0 \\
      \textbf{jShiftBT}, $\beta = \beta_{\rm heur}$  & 1.1\epl 0 &  1.7\epl 0\\
      \textbf{jShiftBT}, $\beta = 0.01$  & 4.2\epl 0 & \textbf{2.3\emi 1}\\
      \textbf{jShiftBT}, $\beta = 0.1$  & 3.8\epl 0 & {\bf 2.3\emi 1}\\
      \textbf{jShiftBT}, $\beta = 1$  & 1.8\epl 0 & 2.6\emi 1\\
      \textbf{jShiftBT}, $\beta = 10$  & {\bf 6.9\emi 1} & 4.0\emi 1\\
      \textbf{jShiftBT}, $\beta = 100$  & 1.3\epl 0 & 2.0\epl 0 \\
  \end{tabular}} 
   \subtable[Results for Example~\ref{exm:CDplayer}]{
\begin{tabular}{c|cc}
        method & $\LL_2$-error & $\LL_\infty$-error \\
        \hline 
      standard \textbf{BT}  & 2.3\epl 2 & 1.2\epl 4\\
      \textbf{TrlBT}  & 6.6\epl 2 & 7.3\epl 2 \\
      \textbf{AugBT}  & 5.2\epl 1 & 2.3\epl 3 \\
      \textbf{BT-BT}  & 5.5\epl 1 & 2.4\epl 3 \\
      \textbf{sShiftBT}  & 6.4\epl 1 & 1.6\epl 3 \\
      \textbf{jShiftBT}, $\beta = \beta_{\rm heur}$  & {\bf 1.9\epl 1} & {\bf 4.9\epl 2} \\
      \textbf{jShiftBT}, $\beta = 0.01$  & 7.2\epl 1 & 5.2\epl 2 \\
      \textbf{jShiftBT}, $\beta = 0.1$  & {\bf 1.9\epl 1} & {\bf 4.9\epl 2} \\
      \textbf{jShiftBT}, $\beta = 1$  & 2.0\epl 1 & 5.7\epl 2 \\
      \textbf{jShiftBT}, $\beta = 10$  & 2.0\epl 1 & 5.4\epl 2 \\
      \textbf{jShiftBT}, $\beta = 100$  & 3.6\epl 1 & 1.3\epl 3 \\
  \end{tabular}}
\end{table}

\section{Concluding Remarks}
In this work we have derived a new alternative procedure for balanced truncation model reduction for systems with nonzero initial value. In contrast to other methods, our method provides an a priori error bound that can be computed efficiently from the solutions of three Lyapunov equations that are needed in the reduction algorithm. As the numerical examples have shown, our error bound and also the errors are often better to those of other techniques available in the literature. Especially our new joint projection method outperforms all the other methods in our numerical experiments. Even, if multiple joint projection ROMs have to be used for a large range of inputs and initial values, they can be constructed very efficiently. This is because the main computational burden is the solution of the three Lyapunov equations, whereas the parameter optimization is comparably cheap. Therefore, we recommend the potential practitioner to use \textbf{jShiftBT} for reducing models with nonzero initial condition.   

\section*{Code Availability}
The \textsc{Matlab} code and data for reproducing the numerical results are 
available for download under the DOI \texttt{10.5281/zenodo.6355512}.

\section*{Acknowledgement}
The authors thank Bj\"orn Liljegren-Sailer (Universit\"at Trier) for noticing an error in a previous version of our software.

\bibliographystyle{abbrv}
\bibliography{SchV20}

\end{document}